\newtheorem{theorem}{Theorem}[section]
\newtheorem{lemma}[theorem]{Lemma}
\newtheorem{proposition}[theorem]{Proposition}
\newtheorem*{claim*}{Claim}
\newcommand{\R}{\mathbb{R}}
\theoremstyle{definition}
\newtheorem{defn}[theorem]{Definition}
\newcolumntype{?}{!{\vrule width 1pt}}
\newcommand{\bfA}{\mathbf{A}}
\newcommand{\bfB}{\mathbf{B}}
\newcommand{\NW}{\text{NW}}
\newcommand{\xv}{\boldsymbol{\chi}}
\newcommand{\mnw}{\text{MNW}}
\newcommand{\efx}{\text{\normalfont EFX}}
\newcommand{\vefx}{\text{\normalfont vEFX}}
\begin{document}

\allowdisplaybreaks

\title{\bf Maximum Nash Welfare and Other Stories About EFX\thanks{\,This work has been partially supported by the ERC Advanced Grant 788893 AMDROMA ``Algorithmic and Mechanism Design Research in Online Markets'', the MIUR PRIN project ALGADIMAR ``Algorithms, Games, and Digital Markets'', 
the NWO Veni project VI.Veni.192.153,
the ERC Starting Grant 639945 ACCORD ``Algorithms for Complex Collective Decisions on Structured Domains'', and an EPSRC doctoral studentship (Reference 1892947).
We would like to thank Ioannis Caragiannis for fruitful discussions at early stages of this work.
}}

\author[1,2,3]{Georgios Amanatidis}
\author[4]{Georgios Birmpas}
\author[5]{Aris Filos-Ratsikas}
\author[4]{\\ Alexandros Hollender}
\author[1]{Alexandros A. Voudouris}

\affil[1]{University of Essex, U.K.}
\affil[2]{University of Amsterdam, Netherlands}
\affil[3]{Sapienza University of Rome, Italy}
\affil[4]{University of Oxford, U.K.}
\affil[5]{University of Liverpool, U.K.}

\date{}

\maketitle

\begin{abstract}
We consider the classic problem of fairly allocating indivisible goods among agents with additive valuation functions and explore the connection between two prominent fairness notions: maximum Nash welfare (MNW) and envy-freeness up to any good (EFX). We establish that an MNW allocation is always EFX as long as there are at most two possible values for the goods, whereas this implication is no longer true for three or more distinct values. As a notable consequence, this proves the existence of EFX allocations for these restricted valuation functions. While the efficient computation of an MNW allocation for two possible values remains an open problem, we present a novel algorithm for directly constructing EFX allocations in this setting. Finally, we study the question of whether an MNW allocation implies any EFX guarantee for general additive valuation functions under a natural new interpretation of approximate EFX allocations.
\end{abstract}

% % % % % % % % % % % % % % % % % %
% % % % % % % % % % % % % % % % % %
\section{Introduction}
\label{sec:intro}
% % % % % % % % % % % % % % % % % %
% % % % % % % % % % % % % % % % % %

\emph{Fair division} refers to the general problem of allocating a set of resources to a set of agents in a way satisfying a desired fairness criterion. A well-known example of such a criterion is \emph{envy-freeness} \citep{GS58,Foley67,Varian74}, 
where each agent perceives the share she receives to be  no worse than what any other agent receives. Since the problem was formally introduced by Banach, Knaster and Steinhaus \citep{Steinhaus48}, fair division has attracted the attention of various scientific disciplines, including mathematics, economics, and political science. During the last two decades, the algorithmic aspects of fair division have been the focus of a particularly active line of work within the computer science community, e.g., see \citep{Procaccia16-survey,BCM16-survey,Markakis17-survey} and references therein. 

We consider the classic setting where the resources are indivisible goods that need to be fully allocated and the agents have additive valuation functions. One of the main challenges in this setting is that classic fairness notions such as equitability, envy-freeness and proportionality---introduced several decades ago having divisible resources in mind---are impossible to satisfy. To see this for envy-freeness, it suffices to consider two agents and one good of value; the agent who does not get the good is going to be envious. This has led to the recent emergence of several weaker fairness notions (see~\nameref{subsec:related}). As a result, there is a plethora of open questions about the existence, the computation and the interrelationships of such notions. In this work we focus on two of the most prominent: envy-freeness up to any good ($ \efx $) and maximum Nash welfare (MNW).

$\efx$, introduced recently by \citet{Gourves2014matroids} and \citet{CaragiannisKMPS19}, is an additive relaxation of envy-freeness. Here an agent may envy another agent but only by the value of the least desirable good in the other agent's bundle. While this added flexibility of $\efx$ takes care of extreme pathological cases like the one mentioned above ($2$ agents, $1$ good), this notion is not well understood yet. Despite the active interest in it, it is not known whether $\efx$ allocations always exist, even for $4$ agents with additive valuation functions.\footnote{The existence of EFX allocations for $2$ agents was shown independently by \citet{Gourves2014matroids} and \citet{PlautR18}, while, very recently,  \citet{chaudhury2020EFX} presented an algorithm that computes an $\efx$ allocation for instances with $3$ agents.} 
We consider the problem of showing the existence of $\efx$ allocations to be one of the most intriguing currently open questions in fair division. 

The \emph{Nash social welfare} (or, simply, Nash welfare) is the geometric mean of the agents' utilities. By considering maximum Nash welfare (MNW) allocations, i.e., allocations that maximize the product of the utilities, we achieve some kind of balance between the \emph{efficiency} of the maximum utilitarian social welfare---the sum of the utilities---and the \emph{individual fairness} of the maximum egalitarian social welfare---the minimum utility. Although not a fairness concept per se, MNW has strong ties to fairness. In the setting where the goods are divisible, each (possibly fractional) MNW allocation corresponds to a \emph{competitive equilibrium from equal incomes}, a market equilibrium (under the assumption that all agents are endowed with the same budget) that is known to guarantee envy-freeness and Pareto optimality \citep{Varian74}. 
Even in our setting, \citet{CaragiannisKMPS19} showed that integral MNW allocations, besides being Pareto optimal, are \emph{envy-free up to one good} (EF1) and approximately satisfy \emph{maximin share fairness} up to a $\Theta(1/\sqrt{n})$ factor, where $n$ is the number of agents.
Both these guarantees are significantly weaker than $\efx$, in the sense that they are both implied by $\efx$ but they do not imply \emph{any} approximation of it. 

In general, MNW does not imply $\efx$. One of our goals is to identify the cases where it does, in terms of the allowed number of distinct values for the goods. For such cases, we immediately obtain that $\efx$ allocations must exist and then investigate how to efficiently compute them, either through maximizing the Nash welfare or directly. Since, in general, MNW does not even imply a non-trivial approximation of $\efx$, we further introduce a less stringent, yet natural, new interpretation of approximate EFX and investigate how it is related to MNW.

\medskip

% % % % % % % % % % % % % % % % % %
\subsection{Our contribution} 
% % % % % % % % % % % % % % % % % %
There are two variants of $\efx$ used in the related literature, depending on whether only the \emph{positively} valued goods are considered or not; for the latter case we adopt the name $\efx_0$ suggested by \citet{KyropoulouSV19}. We start by establishing a strong algorithmic connection between the two variants (Proposition \ref{prop:efx0-efx}).
Then we explore the relationship between maximizing the Nash welfare and achieving $\efx$ or $\efx_0$ allocations. In doing so, we also obtain some interesting results for the individual notions.
In particular:
\begin{itemize}
\item 
In case there are at most two possible values for the goods (\emph{2-value instances}), we show that any allocation that maximizes the Nash welfare is $\efx_0$ (Theorem \ref{thm:MNW-2v}). This has the following two consequences:
\begin{itemize}
\item 
For any 2-value instance, there exists an $\efx_0$ allocation. Note that this is the first such existence result for non-identical valuations that holds for any number of agents and goods. 
\item 
For the special case of binary valuations, by adapting an algorithm of \citet{BarmanKV18binary}, we can efficiently construct an allocation that is both MNW and $\efx_0$.
\end{itemize}
Note that the implication $\textrm{MNW}\Rightarrow\efx_0$  is no longer true for three or more distinct values.
	
\item While for general 2-value instances the efficient computation of an MNW allocation remains an open problem, we propose a polynomial-time algorithm for producing $\efx_0$ allocations in this case (Theorem \ref{thm:2v-alg}). This algorithm, which we call \textsc{Match\textnormal{\&}Freeze}, is based on repeatedly computing maximum matchings and ``freezing'' certain agents whenever they acquire too much value compared to their peers. We believe these novel ideas might be a stepping stone for proving the existence of $\efx$ allocations in more general settings.

\item We also show that the difficulty of computing $\efx$ allocations does not depend solely on the different number of values, but also on the ratio between the maximum and the minimum value. In particular, for instances where the values of the agents lie in an interval such that the ratio between the maximum and the minimum value is at most $2$, we can compute an $\efx$ allocation using a simple variation of the well-known round-robin algorithm (Theorem \ref{thm:modifiedRR}). 
	
\item For general additive valuations, we show that an MNW allocation does not guarantee any non-trivial approximation of $\efx$. However, we argue that the current definition of \emph{approximate} $\efx$ allocations is not always meaningful. Instead, we explore a different natural definition based on the idea of (hypothetically) augmenting an agent's bundle until an $\efx$-like condition is satisfied. For this new benchmark, which we call \efx-\emph{value}, we show that any MNW allocation is a $1/2$-approximation of $\efx$ (Theorem \ref{MNWapprox}).
\end{itemize}

\medskip

% % % % % % % % % % % % % % % % % %
\subsection{Related work} \label{subsec:related}
% % % % % % % % % % % % % % % % % %
As there is a vast literature on fair division, here we focus on the indivisible items setting and on related fairness notions. The concept of \emph{envy-freeness up to one good} (EF1) was implicitly suggested by \citet{LMMS04} and formally defined by \citet{Budish11}. \citet{Budish11} also introduced the notion of \emph{maximin share} (MMS), which has been studied extensively  
\citep{KurokawaPW18,AMNS17,BM17,GargMT19,GHSSY18,GargTaki19}
and has yielded several very interesting variants like \emph{pairwise} MMS \cite{CaragiannisKMPS19}, \emph{groupwise} MMS \citep{BBMN18}, and MMS for groups of agents \cite{Suk18}. %{\color{red} Cite one or two more Suksompong papers here?}

As already mentioned, $\efx$ was introduced by \citet{Gourves2014matroids} (under the term {\em near envy-freeness}) and popularized by \citet{CaragiannisKMPS19}. \citet{PlautR18} defined the notion of $\alpha$-approximate $\efx$ (or $\alpha$-$\efx$) allocations and studied exact and approximate $\efx$ allocations with both additive and general valuations. Most of their results, including the existence of $\efx$ allocations for identical valuations, hold under the similar but stricter notion of $\efx_0$ which is implicitly introduced therein. The currently best $0.618$-approximation of either $\efx$ or $\efx_0$ for the additive case is due to \citet{AMN2020}. For binary additive valuations, \citet{aleksandrov2019greedy} recently proposed an algorithm that produces $\efx$---but not necessarily $\efx_0$---allocations. Independently and at the same time with our work, \citet{Babaioff2020dichotomous} designed an algorithm that computes an $\efx_0$ allocation which maximizes the Nash welfare for submodular dichotomous valuations, a class that includes binary, but does not include general  $2$-value additive valuations.

Very recently, \citet{Manurangsi2020closing} showed that $\efx$ allocations exist with high probability for any number of agents and items under the assumption that the valuations of the agents are drawn at random from a probability distribution. The challenge of showing the existence of $\efx$ allocations is nicely demonstrated in the recent work of \citet{Suksompong2020number}, who showed that in instances with two agents there can be as few as two $\efx$ allocations, while the number of EF1 allocations is always exponential in the number of items.

Besides \citep{CaragiannisKMPS19}, there are several recent papers which relate allocations that maximize (exactly or approximately) the Nash welfare with other fairness notions. \citet{CaragiannisGH19} showed that there exist \emph{incomplete} allocations that are $\efx$ and in which each agent receives at least half of the value they get in a MNW allocation; \citet{CKMS19} achieved the same with only a few unallocated goods. \citet{GargM19} showed how to get an allocation that 2-approximates the Nash welfare of an MNW allocation that is also \emph{proportional up to one good}, satisfies a weak MMS guarantee and is Pareto optimal.

Since computing MNW allocations is an APX-hard problem \citep{Lee2017apx}, there is an active interest on special cases or on approximation algorithms. \citet{BarmanKV18binary} show how to efficiently compute MNW allocations for binary additive valuation functions. 
\citet{ColeG15} were the first to obtain a constant approximation algorithm to the MNW objective. This algorithm, as shown via the improved analysis of \citet{ColeDGJMVY17}, achieves a factor of 2. The currently best-known factor of 1.45 is due to \cite{BarmanMV18Nash}. Going beyond the additive case, in a recent work \cite{GargKK20} study the problem for submodular valuation functions. 

%%%%%%%%%%%%%%%%%%%%
%%%%%%%%%%%%%%%%%%%%
%%%%%%%%%%%%%%%%%%%%
%%%%%%%%%%%%%%%%%%%%
%%%%%%%%%%%%%%%%%%%%

% % % % % % % % % % % % % % % % % %
% % % % % % % % % % % % % % % % % %
\section{Preliminaries and Notation}
\label{sec:prelim}
% % % % % % % % % % % % % % % % % %
% % % % % % % % % % % % % % % % % %
We consider fair division instances $I=(N,M,(v_i)_{i \in N})$ in which there is a set $N$ of $n$ {\em agents} and a set $M$ of $m$ indivisible {\em goods}. Each agent $i \in N$ has a {\em valuation function} $v_i: M \rightarrow \mathbb{R}_{\geq 0}$ assigning a non-negative real value $v_i(g)$ to each good $g \in M$. Throughout this work, $v_i$ is {\em additive}, i.e., $v_i(A) = \sum_{g \in A} v_i(g)$ for every set (or {\em bundle}) of goods $A \subseteq M$. 
We pay particular attention to the following subclasses of additive valuation functions: 
\begin{itemize}
\item {\em Binary}: $v_i(g) \in \{0,1\}$ for every $i \in N$ and $g \in M$;
\item {\em $k$-value}: there is a set $V$ consisting of $|V|=k$ distinct, non-negative real values such that $v_i(g) \in V$ for every $i \in N$ and $g \in M$;
\item {\em Interval-value}: for every agent $i \in N$ there exist two real non-negative numbers $x_i$ and $y_i$ such that $x_i < y_i$, and $v_i(g) \in [x_i,y_i]$ for every $g \in M$. 
\end{itemize}
Of course, any binary instance is a $2$-value instance with $V = \{0,1\}$, but we distinguish between these cases as we are able to obtain stronger algorithmic results for the binary case.

A {\em complete allocation} (or just \emph{allocation}) $\bfA=(A_i)_{i\in N}$ is a vector listing the bundle $A_i$ of goods that each agent $i$ receives,  such that $A_i \cap A_j = \varnothing$ for every $i,j \in N$, and $\cup_{i\in N}A_i = M$. Our goal is to come up with allocations that are considered to be {\em fair} by all agents. We begin by defining envy-freeness and its additive relaxations. 

\begin{defn}
\label{def:notions}
An allocation $\bfA = (A_i)_{i \in N}$ is
\begin{itemize}
\item 
{\em envy-free} (EF) if $v_i(A_i)\geq v_i(A_j)$ for every pair $i,j \in N$;

\item 
{\em envy-free up to one good} (EF1) if for every pair $i,j \in N$ with $A_j\neq \varnothing$ there exists a good $g \in A_j$, such that $v_i(A_i) \geq v_i(A_j \setminus \{g\})$;

\item 
{\em envy-free up to any (positively-valued) good} ($\efx$) if for every pair $i,j \in N$ and every good $g \in A_j$ for which $v_i(g)>0$, it holds that $v_i(A_i) \geq v_i(A_j \setminus \{g\})$;

\item 
{\em envy-free up to any good} ($\efx_0$) if for every pair $i,j \in N$ and every good $g \in A_j$, it holds that $v_i(A_i) \geq v_i(A_j \setminus \{g\})$.
\end{itemize}
\end{defn}
\noindent
By definition, we have  $\text{EF} \Rightarrow \efx_0 \Rightarrow \efx \Rightarrow \text{EF1}$, but no implication works in the opposite direction. For brevity, we say that agent $i$ is $\mathcal{E}$ towards agent $j$ when the criterion of $\mathcal{E} \in \{\text{EF},\text{EF1},\efx,\efx_0\}$  is true for the ordered pair $(i,j)$.

As mentioned in the \nameref{sec:intro}, the Nash welfare is usually defined as the geometric mean of the values. Here, for simplicity, we use the \emph{product} of the values instead. As the allocations (exactly) maximizing the Nash welfare are the same under both definitions, this is without loss of generality.

\begin{defn}
\label{def:NWF}
The {\em Nash welfare} of an allocation $\bfA=(A_i)_{i\in N}$ is the product of the values of the agents for their bundles: 
$\NW(\bfA) = \prod_{i\in N} v_i(A_i).$
\end{defn}

We will usually denote by $\bfA^*$ one of the allocations that maximize the Nash welfare (MNW). Among all such allocations, we will sometimes select $\bfA^*$ so that some additional properties are satisfied; e.g., see the discussion in Section~\ref{sec:mnw}. \citet{CaragiannisKMPS19} showed that $\mnw \Rightarrow \text{EF1}$, but the exact connection between MNW and the variants of $\efx$ is not well-understood. 

Before we dive into our main technical results, we show a somewhat surprising connection between $\efx$ and $\efx_0$. In particular, assuming agents with $k$-value valuation functions, for any $k\in \mathbb N$, the question of finding an $\efx_0$ allocation reduces to finding an $\efx$ allocation for an instance with only slightly perturbed valuation functions. 
An immediate corollary is that the existence (resp.~the efficient computation) of $\efx$ allocations for additive agents implies the existence (resp.~the efficient computation) of $\efx_0$ allocations; the converse statements are obvious.

\begin{proposition}\label{prop:efx0-efx}
Let $k\in \mathbb N$. 
The problem of computing $\efx_0$ allocations for $k$-value instances reduces to the problem of computing $\efx$ allocations for $k$-value instances. When all values are rational numbers, this reduction requires only polynomial time.
\end{proposition}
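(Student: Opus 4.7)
The plan is to exhibit an explicit reduction: given a $k$-value instance $I$, construct a perturbed $k$-value instance $I'$ such that \emph{any} $\efx$ allocation of $I'$ is automatically an $\efx_0$ allocation of $I$. The only nontrivial difference between $\efx$ and $\efx_0$ concerns goods that some agent values at $0$, so the natural move is to replace the value $0$ by a very small positive value $\varepsilon>0$, keeping every other value untouched.

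If $0$ is not among the $k$ distinct values of $I$, then $v_i(g)>0$ for all $i,g$, so $\efx$ and $\efx_0$ coincide on $I$ and the identity reduction works. Otherwise, write the value set as $V=\{0,v_2,\ldots,v_k\}$, and build $I'$ with the same agents and goods by setting $v'_i(g)=\varepsilon$ whenever $v_i(g)=0$ and $v'_i(g)=v_i(g)$ otherwise, where $\varepsilon>0$ will be picked later (small enough that $\varepsilon\notin\{v_2,\ldots,v_k\}$, so $I'$ is still a $k$-value instance). Given an $\efx$ allocation $\bfA$ of $I'$, I would verify the $\efx_0$ condition of $I$ for every pair $i,j$ and every $g\in A_j$: letting $Z_i=\{h\in A_i:v_i(h)=0\}$ and $Z_j=\{h\in A_j:v_i(h)=0\}$, the fact that $v'_i(g)>0$ and $\bfA$ is $\efx$ in $I'$ yields
\[
v_i(A_i)+\varepsilon|Z_i| \;=\; v'_i(A_i)\;\ge\; v'_i(A_j\setminus\{g\}) \;=\; v_i(A_j\setminus\{g\})+\varepsilon\,|Z_j\setminus\{g\}|,
\]
so
\[
v_i(A_i)-v_i(A_j\setminus\{g\}) \;\ge\; \varepsilon\bigl(|Z_j\setminus\{g\}|-|Z_i|\bigr) \;\ge\; -\varepsilon\cdot m .
\]

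The main obstacle is choosing $\varepsilon$ correctly. The quantity on the left lies in a finite discrete subset of the real line determined only by $V$ and $m$; let $\delta>0$ be the smallest \emph{strictly positive} value that this difference can take over all possible $i,j,\bfA,g$. Picking any $\varepsilon\in(0,\delta/m)$ forces $v_i(A_i)-v_i(A_j\setminus\{g\})$ to be $\ge 0$, since a negative value would have to be $\le -\delta$, contradicting the bound. This establishes that $\bfA$ is $\efx_0$ in $I$ and proves the reduction.

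For the polynomial-time claim, when the values are rational I would first clear denominators to reduce to the case of integer values, in which case the discrete difference above is an integer and one may simply take $\varepsilon=\tfrac{1}{m+1}$. Constructing $I'$ from $I$ then takes polynomial time, the encoding length of $\varepsilon$ is polynomial in the input size, and any $\efx$ allocation returned for $I'$ is, by the argument above, an $\efx_0$ allocation for $I$; the converse direction (every $\efx_0$ allocation is $\efx$) is immediate from the definitions, making the reduction correct.
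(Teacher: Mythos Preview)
Your proposal is correct and follows essentially the same approach as the paper: replace the value $0$ by a sufficiently small $\varepsilon>0$, show that any $\efx$ allocation of the perturbed instance is $\efx_0$ for the original, and pick $\varepsilon$ below the minimum nonzero gap $\delta$ between achievable bundle values divided by $m$. Two minor remarks: (i) when you assert that a negative difference must be $\le -\delta$, you are implicitly using that the set of possible differences $v_i(S)-v_i(T)$ is symmetric about~$0$ (which it is, by swapping $S$ and $T$), so this deserves a one-line justification; (ii) your polynomial-time argument via clearing denominators and taking $\varepsilon=1/(m+1)$ is a clean variant of the paper's choice $\varepsilon<1/(mD)$, and works because $\efx$ and $\efx_0$ are invariant under scaling each $v_i$ by a positive constant.
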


\begin{proof}
Consider any instance $I = (N, M, (v_i)_{i \in N})$. 
Let $\delta$ be the minimum non-zero value difference among any two subsets of goods, according to the valuation function of any agent, that is  
$$\delta = \min_{i \in N} \min_{\substack{A, B \subseteq 2^M: \\ v_i(A) < v_i(B)}}\{ v_i(B) - v_i(A)\}.$$
Furthermore, pick an arbitrarily small $\varepsilon \in (0, \frac{\delta}{m})$.
Now, let $I'=(N,M,(\tilde{v}_i)_{i \in N})$ be an instance such that 
\begin{align*}
\tilde{v}_{i}(g) =
\begin{cases}
v_i(g), & \text{if } v_i(g)>0 \\
\varepsilon, & \text{if } v_i(g)=0.
\end{cases}
\end{align*}
That is, $I'$ is obtained from $I$ by changing any $0$ in the valuation functions of the agents to $\varepsilon$.
Assume that there exists an $\efx$ allocation $\bfA$ for instance $I'$. We will show that $\bfA$ is also an $\efx_0$ allocation for $I$. 

Consider any pair of agents $i, j \in N$. Since $\bfA$ is $\efx$ in $I'$, we have that $\tilde{v}_i(A_i) \geq \tilde{v}_i(A_j \setminus g)$ for every $g \in A_j$. Let $g^* = \arg\min_{g \in A_j} v_i(g)$. Observe that by the choice of $\varepsilon$, $g^*$ is the one that must be ignored when we check whether $\bfA$ is $\efx_0$ for $I$ as well.
\begin{itemize}
\item If $v_i(g^*)=0$, then it must be the case that $v_i(A_i) \geq v_i(A_j)$. Assume otherwise that $v_i(A_i) < v_i(A_j)$. Then, by the definition of $\delta$, it must be $v_i(A_i) \leq v_i(A_j) - \delta$. Using the fact that $v_i(A_j) = v_i(A_j \setminus \{g^*\}) \leq \tilde{v}_i(A_j \setminus \{g^*\})$, by our choice of $\varepsilon$, we have that
\begin{align*}
\tilde{v}_i(A_i) 
\leq v_i(A_i) + m \varepsilon 
\leq v_i(A_j) - \delta + m \varepsilon 
< \tilde{v}_i(A_j \setminus \{g^*\}),
\end{align*} 
contradicting the assumption that $\bfA$ is $\efx$ in $I'$. Hence, $i$ is envy-free towards $j$.

\item
If $v_i(g^*) > 0$, then it must be the case that $v_i(A_i) \geq v_i(A_j \setminus \{g^*\})$. As before, assume otherwise that $v_i(A_i) < v_i(A_j \setminus \{g^*\})$ or, equivalently, $v_i(A_i) \leq v_i(A_j \setminus \{g^*\}) - \delta$. Since 
$ v_i(A_j \setminus \{g^*\}) \leq \tilde{v}_i(A_j \setminus \{g^*\})$, and by our choice of $\varepsilon$, we have that
\begin{align*}
\tilde{v}_i(A_i) 
\leq v_i(A_i) + m \varepsilon 
\leq \tilde{v}_i(A_j \setminus \{g^*\}) - \delta + m \varepsilon  
< \tilde{v}_i(A_j \setminus \{g^*\}),
\end{align*} 
again contradicting the assumption that $\bfA$ is $\efx$ in $I'$. Hence, $i$ is $\efx_0$ towards $j$.
\end{itemize}
Therefore the computation of an $\efx_0$ allocation can be reduced to computing an $\efx$ allocation in an instance with slightly perturbed valuation functions as above. 

In case the values are rational numbers, this reduction needs only polynomial time as $\delta$ is at least $1/D$, where $D$ is the denominator of the product of the values of all agents for all goods, and hence it suffices to choose $\varepsilon \in (0,\frac{1}{mD})$. 
\end{proof}

%%%%%%%%%%%%%%%%%%%%%
%%%%%%%%%%%%%%%%%%%%%

\section{Maximum Nash Welfare: EFX and Computational Complexity}
\label{sec:mnw}
In this section we focus on allocations that maximize the Nash welfare. We first identify the subclasses of valuation functions for which the MNW allocations are always $\efx_0$, and then consider computational complexity questions. 

Before moving forward, we need to discuss how we handle instances with zero Nash welfare and instances containing {\em zero-valued} goods, i.e., goods for which all agents have value $0$.\footnote{\,Even though it seems quite natural to discard such zero-valued goods, there are settings where one cannot assume {\em free disposal} and all goods must be allocated.}

\medskip

\noindent{\bf Instances with zero Nash welfare.} 
When we talk about the MNW allocations of an instance, the standard interpretation would be to include \emph{all} complete allocations which achieve the maximum Nash welfare. When it is possible to achieve positive Nash welfare this is indeed true. However, for the extreme case of instances where all allocations have zero Nash welfare we are going to need a refinement. Following the work of \citet{CaragiannisKMPS19}, we call an allocation an MNW allocation if it (1) maximizes the number of agents with positive value, and then (2) maximizes the product of the values of such agents. 

The requirements (1) and (2) are by default true for MNW allocations in instances with positive Nash welfare.  
They are also necessary because when the Nash welfare is zero, the idea of maximizing it clearly fails to distinguish ``good'' allocations in any sense.
To illustrate this, consider the next instance:
\begin{center}
\begin{tabular}{c?cccc}
  & $g_1$ & $g_2$ & $g_3$ \\
\noalign{\hrule height 1pt}
agent $ 1 $ & $1$ & $0$ & $0$ \\
agent $ 2 $ & $1$ & $0$ & $0$ \\
agent $ 3 $ & $0$ & $1$ & $1$ \\
\end{tabular}
\end{center}
Since the first two agents only like $g_1$, the Nash welfare of any allocation is $0$. However, not all allocations are $\efx_0$. The allocation $\{\varnothing,\varnothing,\{g_1,g_2,g_3\}\}$ is clearly not $\efx_0$ since the first two agents envy agent $3$ even after the removal of either $g_2$ or $g_3$. Even an allocation such as $\{\{g_1,g_2\},\varnothing,\{g_3\}\}$, which maximizes the number of agents with positive value, is not $\efx_0$ since agent $2$ envies agent $1$ even after the removal of $g_2$. On the other hand, the allocation $\{\{g_1\},\varnothing,\{g_2,g_3\}\}$, which maximizes the number of agents with positive value \emph{as well as} the product of their values, is indeed $\efx_0$: the envy of agent $2$ towards agent $1$ is eliminated by the removal of $g_1$.

\medskip

\noindent{\bf Instances with zero-valued goods.}
While, clearly, zero-valued goods do not affect the Nash welfare of an allocation, they do play an important role as to whether  this allocation is $\efx_0$. To allocate such goods, we first ignore them completely, and compute a Nash welfare maximizing partial allocation $\bfB^*$ only for the remaining goods (which are positively valued by some agent), subject to the requirements (1) and (2) in case the Nash welfare is zero. We then obtain the complete MNW allocation $\bfA^*$ by allocating all the zero-valued goods to one of the agents with the least value according to $\bfB^*$.\footnote{\,For the restricted valuation classes we study here, this suffices. A more general alternative way to complete the allocation would be to allocate all the zero-valued goods to one of the agents that no one envies in $\bfB^*$. It is not hard to show that in any MNW (partial) allocation at least one such agent exists.} 
%I believe it should be ``one of the agents that no one envies''. Note that in a MNW allocation at least one such agent exists. 
Observe that $\NW(\bfA^*)=\NW(\bfB^*)$, by definition. Allocating the zero-valued goods this way is also necessary as we illustrate next. Consider the same example as above, but with an extra zero-valued good $g_4$, such that:
\begin{center}
\begin{tabular}{c?ccccc}
  & $g_1$ & $g_2$ & $g_3$ & $g_4$ \\
\noalign{\hrule height 1pt}
agent $ 1 $ & $1$ & $0$ & $0$ & $0$ \\
agent $ 2 $ & $1$ & $0$ & $0$ & $0$ \\
agent $ 3 $ & $0$ & $1$ & $1$ & $0$ \\
\end{tabular}
\end{center}
As before, all allocations have zero Nash welfare, and hence we need an allocation that satisfies (1) and (2). 
The allocation $(\{g_1,g_4\},\varnothing,\{g_2,g_3\})$ is indeed such an allocation: the number of agents with positive value as well as the product of their values are maximized. However, because $g_4$ has been given to agent $1$ (who has value $1$) instead of $2$ (who has value $0$), agent $2$ envies agent $1$ even after its, and thus the allocation is not $\efx_0$. By moving $g_4$ to agent $2$, we obtain the allocation $(\{g_1\},\{g_4\},\{g_2,g_3\})$, which maximizes the number of agents with positive value, the product of their values, and gives the all-zero good $g_4$ to the agent with the least value among all agents, and is indeed $\efx_0$ as agent $1$ has only one good. 

\subsection{When does MNW imply EFX?}\label{subsec:mnw-efx}
Our main result here is that for all $2$-value instances any MNW allocation is also $\efx_0$. Moreover, this result is tight: there exist $3$-value instances for which this implication is no longer true. To simplify the presentation of our results, we distinguish between binary and general $2$-value instances.

\begin{theorem}\label{thm:MNW-binary}
For every binary instance, any MNW allocation is $\efx_0$.
\end{theorem}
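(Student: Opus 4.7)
The plan is to show that any $\efx_0$ violation in an MNW allocation would enable a good transfer that strictly improves the MNW objective (under the convention of first maximizing the number of positive-valued agents, then their product), yielding a contradiction. As preparation, I would strip away the universally zero-valued goods and analyze the induced MNW partial allocation $\bfB^*$ over goods that some agent values at $1$. Because universally zero-valued goods contribute $0$ to every $v_i$, the $\efx_0$ condition for the full $\bfA^*$ reduces to $\efx_0$ for $\bfB^*$ plus the single inequality $v_i(B_i^*) \geq v_i(B_{j^*}^*)$, where $j^*$ is the least-valued agent to whom all universally zero-valued goods are assigned; this inequality will fall out of the structural lemma below.

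The structural lemma I would prove first states that in $\bfB^*$ every good $g \in B_i^*$ satisfies $v_i(g) = 1$, so $v_i(B_i^*) = |B_i^*|$. If instead $v_i(g) = 0$, then by construction of $\bfB^*$ some agent $k$ values $g$ at $1$; moving $g$ from $i$ to $k$ leaves $v_i$ unchanged while either strictly increasing $v_k$ (when $v_k(B_k^*) > 0$) or promoting $k$ into the positive-valued set (when $v_k(B_k^*) = 0$), each a strict MNW improvement.

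The main step is to assume toward contradiction an $\efx_0$ violation $(i, j, g)$ in $\bfB^*$, i.e.\ $|B_i^*| < v_i(B_j^*) - v_i(g)$, and derive a strict improvement via a single transfer. A short case analysis on $v_i(g) \in \{0, 1\}$ yields $|B_j^*| \geq |B_i^*| + 2$ in both subcases (when $v_i(g) = 0$, the fact that $g$ itself is in $B_j^*$ but does not contribute to $v_i(B_j^*)$ provides the extra $+1$). I would then pick any $g^* \in B_j^*$ with $v_i(g^*) = 1$ and transfer it to $i$; the structural lemma guarantees $v_j(g^*) = 1$, so $v_i$ rises by $1$ and $v_j$ drops by $1$. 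The arithmetic identity $(|B_i^*| + 1)(|B_j^*| - 1) - |B_i^*| \cdot |B_j^*| = |B_j^*| - |B_i^*| - 1 \geq 1$ shows a strict product improvement whenever both bundles stay positive; the corner case $|B_i^*| = 0$ is handled because the count of positive-valued agents then strictly rises. Either way, MNW optimality of $\bfB^*$ is contradicted.

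The main obstacle I expect is the careful bookkeeping around the two-tier MNW convention in the zero-Nash-welfare regime together with the interaction of universally zero-valued goods. A naive product-only swap argument would appear to fail in precisely the boundary configurations where $\efx_0$ is most delicate; the observation that universally zero-valued goods must be assigned to one least-valued agent is both what enables the reduction to $\bfB^*$ and what closes the last corner case when lifting $\efx_0$ back to $\bfA^*$.
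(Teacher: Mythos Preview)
Your proposal is correct and follows essentially the same approach as the paper: reduce to the partial allocation $\bfB^*$ on goods with some positive value, prove the structural lemma $v_i(B_i^*)=|B_i^*|$, and then show that any $\efx_0$ violation forces $|B_j^*|\ge |B_i^*|+2$, so a single-good transfer strictly improves the MNW objective. Your treatment is in fact slightly more streamlined than the paper's: you handle the $\text{NW}>0$ and $\text{NW}=0$ regimes with one unified swap/count argument and do not appeal to the external $\text{MNW}\Rightarrow\text{EF1}$ result, whereas the paper splits into those two cases and, in the positive-NW case, invokes EF1 when $\min_{g\in B_j} v_i(g)=1$; you also make the lifting step from $\bfB^*$ to $\bfA^*$ (via $v_i(B_{j^*}^*)\le |B_{j^*}^*|\le |B_i^*|$) more explicit than the paper does.
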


\begin{proof}
Consider any binary instance $I = (N,M,(v_i)_{i \in N})$, and let $\bfB^*=(B_i)_{i \in N}$ be the allocation that maximizes the Nash welfare (by maximizing the number of agents with positive value and then the product of their value in case the MNW is zero) for the sub-instance $I_{>0}$ consisting only of the goods which are positively valued by some agent. Then, $\bfA^*$ is obtained from $\bfB^*$ by allocating the remaining goods (which are valued as zero by all agents) to one of the agents with the least value for their own bundles. 

Observe that in $I_{>0}$, $\bfB^*$ must be such that all agents with positive value get goods which they value as $1$ and all agents with zero value get an empty set. Assume otherwise that some agent $i$ gets a good $g$ such that $v_i(g)=0$. Then by moving $g$ to some agent $j \neq i$ with $v_j(g)=1$ we can strictly increase either the product of the values of the agents that have positive value or the number of agents that get positive value, a contradiction. 
Hence, we have that $v_i(B_i) = |B_i|$. 

We next show that by allocating all the zero-valued goods to some agent $i^* \in \arg\min_{i \in N}v_i(B_i)$ we have that $\bfA^*$ is $\efx_0$ as long as $\bfB^*$ is $\efx_0$. 
We distinguish between two cases depending on whether $\NW(\bfB^*) > 0$ or $\NW(\bfB^*)=0$. 

\medskip

\noindent\emph{\underline{Case I:}} $\NW(\bfB^*) > 0$.

\noindent
Consider a pair of agents $i$ and $j$. If $\min_{g \in B_j} v_i(g) = 1$, then $i$ must be $\efx_0$ towards $j$ since $i$ is EF1 towards $j$, and the two notions coincide. 
So, from now on we assume that $\min_{g \in B_j} v_i(g) = 0$. Moreover, we assume that there exists a good in $B_j$ that $i$ values as $1$, since otherwise $i$ would trivially be $\efx_0$ towards $j$, and hence $|B_j| \geq 2$. 

We will show that $i$ is in fact envy-free towards $j$, and thus $v_i(B_i) \geq  v_i(B_j)$.
Assume towards a contradiction that $v_i(B_i) < v_i(B_j)$. Since there exists a good $g \in B_j$ such that $v_i(g)=0$, we have that $v_i(B_j) < v_j(B_j)$. By the fact that $v_i(B_i) = |B_i|$ and $v_j(B_j) = |B_j|$, we thus obtain that $|B_j| \geq |B_i| + 2$. Now, define a new allocation by moving a good in $B_j$ that $i$ values as $1$ from $j$ to $i$. The product of the values of the two agents in the new allocation is equal to  
\begin{align*}
(|B_i| + 1)(|B_j| - 1) = |B_i||B_j| + |B_j| - |B_i| - 1 \geq |B_i||B_j| + 1.
\end{align*} 
Since the bundles of the remaining agents have not changed, the new allocation has strictly higher Nash welfare compared to $\bfB^*$, a contradiction.

\medskip

\noindent\emph{\underline{Case II:}} $\NW(\bfB^*) = 0$. 

\noindent
Consider a pair of agents $i$ and $j$. If $B_i=\varnothing$ and $B_j = \varnothing$, then they are trivially envy-free towards each other. Also, if $B_i\neq \varnothing$ and $B_j \neq \varnothing$, we can show that they are $\efx_0$ towards each other by adapting our arguments for the previous case. Hence, we now focus on the case where $B_i = \varnothing$ and $B_j \neq \varnothing$. If $|B_j|=1$, then $i$ is trivially $\efx_0$ towards $j$. Hence, assume that $|B_j| \geq 2$. We claim that $\max_{g \in B_j} v_i(g) = 0$, and consequently $i$ is envy-free towards $j$. Assume otherwise that there exists a good $g \in B_j$ such that $v_i(g)=1$. Then, by moving $g$ from $j$ to $i$ we can either obtain positive Nash welfare if $i$ is the only agent with zero value, or we can increase the number of agents with positive value in case the Nash welfare remains equal to $0$; since $v_j(B_j)=|B_j| \geq 2$, $j$ still has positive value even after losing $g$. 

\smallskip
	
\noindent In any case, we conclude that $\bfB^*$ is $\efx_0$, and consequently $\bfA^*$ is $\efx_0$ as well. 
\end{proof}

We continue by showing that maximizing the Nash welfare yields an $\efx_0$ allocation for all $2$-value instances. 
Since a $2$-value instance with values $a>1$ and $b=0$ is equivalent to a binary instance (by normalizing the values), Theorem \ref{thm:MNW-binary} above implies that we only need to focus on instances with positive values. Note that in this case $\efx_0$ coincides with $\efx$.

\begin{theorem}\label{thm:MNW-2v}
For any $2$-value instance with positive values, any MNW allocation is $\efx$.
\end{theorem}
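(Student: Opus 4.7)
The plan is to argue by contradiction: suppose $\bfA^* = (A_i^*)_{i \in N}$ is an MNW allocation that fails to be $\efx$, and fix agents $i, j$ together with the good $g^* \in A_j^*$ minimizing $v_i(\cdot)$ over $A_j^*$ that witness this failure, so that $v_i(A_i^*) < v_i(A_j^* \setminus \{g^*\})$ and $|A_j^*| \geq 2$. Denote the two possible values by $a > b > 0$. A short preliminary observation rules out the degenerate case $v_i(A_i^*) = 0$: transferring any single good from $A_j^*$ to $i$ would make $i$ positive-value without making $j$ non-positive (since $|A_j^*| \geq 2$ and both values are positive), strictly increasing the number of positive-value agents and contradicting the refined MNW definition. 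Hence $v_i(A_i^*), v_j(A_j^*) > 0$.

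The main tool is then the standard MNW swap inequality: for any $g \in A_j^*$, transferring $g$ from $j$ to $i$ cannot strictly increase the Nash welfare, giving
\[
\bigl(v_i(A_i^*) + v_i(g)\bigr)\bigl(v_j(A_j^*) - v_j(g)\bigr) \leq v_i(A_i^*)\,v_j(A_j^*),
\]
which after rearrangement yields $v_i(A_i^*) \geq \frac{v_i(g)}{v_j(g)} \cdot v_j(A_j^* \setminus \{g\})$. Since $v_i(g)/v_j(g) \in \{1, a/b, b/a\}$, the strategy is to pick $g$ so as to maximize this lower bound and then compare it with $v_i(A_j^* \setminus \{g^*\}) \leq a(|A_j^*|-1)$. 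The cleanest situation is when some $g \in A_j^*$ has $v_i(g) = a$ and $v_j(g) = b$: the bound becomes $v_i(A_i^*) \geq (a/b) \cdot b(|A_j^*|-1) = a(|A_j^*|-1)$, which already contradicts the failure.

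Otherwise, every $g \in A_j^*$ has $(v_i(g), v_j(g)) \in \{(a,a), (b,a), (b,b)\}$, so I partition $A_j^*$ into the corresponding subsets $X, Y, Z$ and perform a short case analysis on which of them are non-empty. If $Z \neq \varnothing$, swapping a good from $Z$ gives a bound that exceeds the $\efx$-threshold by $(a-b)|Y| \geq 0$; if $Z = \varnothing$ while both $X, Y \neq \varnothing$, swapping from $X$ produces a slack of $(a-b)(|Y|-1) \geq 0$; and if only one of $X$ or $Y$ is non-empty, a swap from that set matches the threshold exactly. The main obstacle is not conceptual but the bookkeeping required to confirm that the chosen swap beats $v_i(A_j^* \setminus \{g^*\})$ in every configuration of $(|X|, |Y|, |Z|)$; the key insight is that whenever a good of the ``favorable'' type $(a,b)$ is present in $A_j^*$ it single-handedly closes the argument, and the remaining cases reduce to verifying simple linear inequalities in the partition sizes.
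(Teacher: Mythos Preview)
Your argument is correct. Both your proof and the paper's proceed by contradiction and perform a case analysis on the four value-types $(v_i(g),v_j(g))\in\{a,b\}^2$ of goods in $A_j$, using transfers to contradict maximality of the Nash welfare. The organization, however, differs in a way worth noting. You first distill the single-good transfer into the unified inequality $v_i(A_i^*)\ge \frac{v_i(g)}{v_j(g)}\,v_j(A_j^*\setminus\{g\})$ and then run a clean four-way split on which types populate $A_j^*$; in particular, the presence of an $(a,b)$-type good settles everything in one line. The paper instead branches first on whether $A_j$ contains a $T_{bb}$ good, and in the subcase where a $T_{ab}$ good is present it uses \emph{two-good exchanges} between $A_i$ and $A_j$ to constrain the structure of $A_i$ (forcing $A_i$ to consist solely of $T_{ab}$ goods) before finishing with a transfer. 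Your route avoids this structural detour entirely and never needs to look inside $A_i$, at the cost of a slightly more careful bookkeeping of the partition sizes $|X|,|Y|,|Z|$. One minor point: when you invoke ``cannot strictly increase the Nash welfare'' you should note that even when the overall product is zero, the refined MNW criterion (maximize the number of positive agents, then their product) still yields exactly your inequality, since the transfer keeps both $i$ and $j$ positive and leaves all other bundles untouched.
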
 

\begin{proof}
Let $a > b > 0$ and consider any $2$-value instance $I = (N,M,(v_i)_{i \in N})$ in which $v_i(g) \in \{a,b\}$ for every $i\in N$ and $g \in M$. Let $i$ and $j$ be any two agents who are given the sets of goods $A_i$ and $A_j$ in an MNW allocation $\bfA^*$. We say that a good is of {\em type} $T_{xy}$ if $i$ and $j$ have values $v_i(g)=x$ and $v_j(g)=y$ for good $g$, respectively; so there are four different types of goods: $T_{aa}$, $T_{ab}$, $T_{ba}$ and $T_{bb}$. If $\min_{g \in A_j} v_i(g) = a$ or $\max_{g \in A_j} v_i(g) = b$, then $i$ is $\efx$ towards $j$ since $i$ is EF1 towards $j$~\citep{CaragiannisKMPS19} and the two notions coincide in this case for the pair $(i,j)$. Therefore, from now on, we will assume that $\min_{g \in A_j} v_i(g) = b$ and $\max_{g \in A_j} v_i(g) = a$, which implies that $|A_j| \geq 2$ and $A_j$ includes at least one good of type $T_{ba}$ or $T_{bb}$.

\medskip
	
\noindent\emph{\underline{Case I:} There is at least one good of type $T_{bb}$ in $A_j$.} 
	
\smallskip
	
\noindent\emph{\underline{Subcase (a):} $A_j$ does not include any good of type $T_{ab}$.}	
Assume, towards a contradiction, that $i$ is not $\efx$ towards $j$: $v_i(A_i)< v_i(A_j) - b $. Since $v_j(g) \geq v_i(g)$ for all  $g \in A_j$, we have that $v_j(A_j) \geq v_i(A_j)$. We now define a new allocation by moving a good $h \in A_j$ of type $T_{bb}$ from $j$ to $i$. In this new allocation, the product of the values of $i$ and $j$ is
\begin{align*}
(v_i(A_i) + b) (v_j(A_j) - b) 
&= v_i(A_i) v_j(A_j) + b(v_j(A_j) - v_i(A_i) - b) \\
&\geq v_i(A_i) v_j(A_j) + b(v_i(A_j) - v_i(A_i) - b) \\
&> v_i(A_i) v_j(A_j).
\end{align*} 
Since the allocation of all other agents has not been changed, the new allocation achieves a strictly larger Nash welfare than $\bfA^*$, yielding a contradiction. 
	
\medskip
	
\noindent\emph{\underline{Subcase (b):} $A_j$ includes at least one good $g$ of type $T_{ab}$.}
We will argue about the structure of set $A_i$. If $A_i$ includes any good $x$ of type $T_{aa}$, $T_{ba}$ or $T_{bb}$, then by exchanging $g$ with $x$, we obtain an allocation with strictly higher Nash welfare, contradicting the choice of $\bfA^*$. For example, if $x$ is of type $T_{aa}$, then in the new allocation (after swapping $x$ and $g$) agent $i$ has exactly the same value, but agent $j$'s value has strictly increased by an amount $a - b > 0$. One can verify that the same holds for the other two types. Hence, $A_i$ must include only goods of type $T_{ab}$, which implies that $v_i(A_i) = |A_i|a$.
	
Towards a contradiction, assume that $i$ is not $\efx$ towards $j$. If $|A_j| \leq |A_i| + 1$, since $A_j$ includes some good $h$ for which $v_i(h)=b$, we have that 
$$v_i(A_j) \leq (|A_j|-1)a + b \leq |A_i|a + b = v_i(A_i) + b,$$ 
i.e., agent $i$ is $\efx$ towards $j$. So, it must be $|A_j| \geq |A_i| + 2$. We create a new allocation by moving a good $g \in T_{ab}$ from $j$ to $i$. The product of the values of $i$ and $j$ then becomes 
\begin{align*}
(v_i(A_i) + a)(v_j(A_j) - b)
= v_i(A_i) v_j(A_j) + a v_j(A_j) - b v_i(A_i) - ab.
\end{align*}
Since $v_j(A_j) \geq |A_j| b \geq (|A_i| + 2) b$ and $v_i(A_i) = |A_i|a$, we have that
\begin{align*}
a v_j(A_j) - b v_i(A_i) - ab 
\geq (|A_i| + 2)ab - |A_i|ab - ab = ab > 0. 
\end{align*}
Since the bundles of the other agents have not been changed, we have that the new allocation has strictly larger Nash welfare than $\bfA^*$, contradicting its choice. %Consequently, $\bfA^*$ must be $\efx$.

\medskip
	
\noindent\emph{\underline{Case II:} There are no goods of type $T_{bb}$ in $A_j$.} 

\smallskip
	
\noindent
Then $A_j$ includes at least one good of type $T_{ba}$. 
If $A_j$ includes at least one good of type $T_{ab}$, then, as we argued in Case I(b) above, in order for $\bfA^*$ to be an MNW allocation, $A_i$ cannot include any goods of type $T_{aa}$, $T_{ba}$ or $T_{bb}$. As a result, $A_i$ includes only goods of type $T_{ab}$ and by reproducing the analysis  used in Case I(b) it follows that $\bfA^*$ is $\efx$.
	
So, we may assume that $A_j$ includes goods of type $T_{ba}$ and $T_{aa}$ only. This implies that $v_j(A_j) = |A_j| a$. Assume towards a contradiction that $i$ is not $\efx$ towards $j$: $v_i(A_i) < v_i(A_j) - b$. Since $A_j$ contains at least one good that $i$ values as $b$, we also have that	$v_i(A_j) \leq (|A_j|-1)a + b.$ Combining the last two expressions, we obtain that
$$v_i(A_i) + a < |A_j|a = v_j(A_j).$$
Now, consider the allocation that is obtained from $\bfA^*$ by moving a good of type $T_{aa}$ from $j$ to $i$. We know that such an item exists since $\max_{g \in A_j} v_i(g) =  a$. By using the last inequality, the product of the values of $i$ and $j$ in the new allocation is
\begin{align*}
(v_i(A_i) + a)(v_j(A_j) - a)
&= v_i(A_i) v_j(A_j) + a(v_j(A_j) - v_i(A_i) - a) \\
&> v_i(A_i) v_j(A_j),
\end{align*}
which combined with the fact that the bundles of the other agents have not been changed, contradicts the choice of $\bfA^*$.
	
\smallskip
	
\noindent In any case, we conclude that $\bfA^*$ must be $\efx$.
\end{proof}

\citet{CaragiannisGH19} presented  a $3$-value instance in which no MNW allocation is $\efx$. For completeness, we include here a simpler such instance, which further shows that the implication $\text{MNW} \Rightarrow \{\efx,\efx_0\}$ is no longer true even for interval-value instances in which the length of the interval is almost zero. 
Let $\varepsilon$ be a small positive constant and consider an instance with two agents and three goods with values as shown in the table:
\begin{center}
\begin{tabular}{c?cccc}
 & $g_1$ & $g_2$ & $g_3$ \\
\noalign{\hrule height 1pt}
agent $ 1 $ & $1-\varepsilon$ & $1$ & $1+\varepsilon$ \\
agent $ 2 $ & $1$ & $1-\varepsilon$ & $1+\varepsilon$ \\
\end{tabular}
\end{center}
This is a $3$-value instance with values $\{1-\varepsilon,1,1+\varepsilon\}$. Clearly, it is also an interval-value instance with interval of length $2\varepsilon$, which can be arbitrarily close to zero by selecting $\varepsilon$ to be extremely small. It is easy to verify that there are exactly two allocations achieving the maximum Nash welfare of $2+\varepsilon$: $\bfA_1 = (\{g_2\}, \{g_1,g_3\})$ and $\bfA_2 = (\{g_2,g_3\}, \{g_1\})$. The Nash welfare of any other allocation is either $2(1-\varepsilon)$ or $2+\varepsilon - \varepsilon^2$. Now, for $\ell\in\{1,2\}$, observe that in $\bfA_{\ell}$ agent $\ell$ is not $\efx$ towards the other agent since she envies her even after the removal of $g_{\ell}$.

\subsection{On the complexity of maximizing the Nash welfare}

We now turn our attention to the complexity of computing a maximum Nash welfare allocation. This problem is already known to be hard for many domain restrictions, and easy for only a few special cases. Nevertheless, its complexity for $k$-value instances with $k\in\{2, 3, 4\}$ has been open. Here we make significant progress towards settling these cases. We again start with the binary case. 

\begin{theorem}\label{thm:MNW-binary-alg}
For binary instances, computing an MNW allocation (and thus an $\efx_0$ allocation) can be done in polynomial time.
\end{theorem}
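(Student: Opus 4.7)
The plan is to adapt the algorithm of \citet{BarmanKV18binary} for binary additive valuations so that it respects our refined MNW definition (maximize first the number of agents with positive value, then the product of such values) and properly handles zero-valued goods. Combined with Theorem~\ref{thm:MNW-binary}, the resulting allocation will automatically be $\efx_0$.

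First I would split $M$ into the set $M_{>0}$ of goods positively valued by some agent and $M_0 = M \setminus M_{>0}$. I would compute an MNW partial allocation $\bfB^*$ on the sub-instance restricted to $M_{>0}$, and then output $\bfA^*$ obtained by assigning every good in $M_0$ to a fixed $i^* \in \arg\min_{i\in N} v_i(B^*_i)$. By the proof of Theorem~\ref{thm:MNW-binary}, $\bfA^*$ is MNW and $\efx_0$.

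For the sub-instance on $M_{>0}$ I would work with the bipartite \emph{like graph} $H$ joining each agent to every good she values at $1$. Phase~1 computes a maximum matching in $H$ and extends it to a complete assignment of $M_{>0}$ by greedily giving each remaining good to any liking agent (at least one exists by definition of $M_{>0}$); thus every assigned good is a liked good, so $v_i(B_i) = |B_i|$, and the number of agents with positive value equals the matching number of $H$, which is a clear upper bound. Phase~2 then repeatedly improves the allocation using \emph{transfer-path augmentations}: a transfer path from $j$ to $i$ is an alternating sequence $i = i_0, g_0, i_1, g_1, \ldots, g_{k-1}, i_k = j$ with $g_\ell \in B_{i_{\ell+1}}$ and $v_{i_\ell}(g_\ell) = 1$, and reassigning each $g_\ell$ to $i_\ell$ preserves the values of $i_1,\ldots,i_{k-1}$, increases $v_i(B_i)$ by $1$, and decreases $v_j(B_j)$ by $1$. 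Whenever some pair $(i,j)$ satisfies $v_j(B_j) \geq v_i(B_i) + 2$ and admits a transfer path, we execute the shortest such one; the path is found via BFS on the auxiliary digraph on agents whose edges $u \to w$ mark that some good in $B_w$ is liked by $u$. Each augmentation strictly decreases $\sum_i v_i(B_i)^2$ while preserving $\sum_i v_i(B_i)$, so the process terminates in polynomially many iterations.

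The main obstacle is correctness of the terminal allocation. For requirement~(1), every augmentation fires only when $v_j(B_j)\geq 2$, so $j$ remains positive-valued; intermediate agents' values are unchanged, and $i$'s value strictly increases, so the count of positive-valued agents never drops below its Phase~1 maximum. For requirement~(2), I would follow \citet{BarmanKV18binary} and argue that the absence at termination of any transfer path witnessing $v_j(B_j)\geq v_i(B_i)+2$ forces product-optimality: for any candidate $\bfB'$ with strictly larger product (and at least as many positive-valued agents), the symmetric difference $\bfB \triangle \bfB'$ decomposes in the transfer-path digraph into either a directed path exposing a missed valid augmentation---contradicting termination---or cycles whose contraction shows $\bfB'$ has the same utility multiset as $\bfB$, contradicting strict improvement. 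This rearrangement/AM-GM style argument, rather than the mechanical part of the algorithm, is where the adaptation from \citet{BarmanKV18binary} has to be carried out carefully to accommodate our refined MNW definition.
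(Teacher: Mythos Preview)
Your proposal is correct in spirit and would work, but it takes a noticeably different route from the paper. The paper treats \textsc{Alg-Binary} of \citet{BarmanKV18binary} as a black box: run it on the sub-instance $I_{>0}$; if the resulting Nash welfare is positive, you are done; if it is zero, first compute a maximum bipartite matching on the like graph to identify a maximum-cardinality set $S$ of agents that can simultaneously receive positive value, and then run \textsc{Alg-Binary} again restricted to those agents only (so that its output has strictly positive Nash welfare on $S$). Agents outside $S$ receive the empty set, and finally the zero-valued goods go to a minimum-value agent.

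By contrast, you essentially \emph{re-implement} the Barman--Krishnamurthy--Vaish procedure (maximum matching plus transfer-path augmentations) and sketch its correctness proof from scratch, arguing directly that the terminal allocation satisfies both requirements (1) and (2). This is more self-contained but also more work: you end up needing the decomposition of $\bfB \triangle \bfB'$ into augmenting paths and cycles and the AM--GM-style optimality argument, which you correctly flag as the delicate part. The paper avoids all of that by outsourcing product-maximization to the cited algorithm and only adding the matching preprocessing step to handle the degenerate zero-Nash-welfare case. Your approach has the advantage of being a single unified algorithm that naturally handles both cases, while the paper's has the advantage of a much shorter proof that leans entirely on prior work.
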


\begin{proof}
Consider any binary instance $I=(N,M,(v_i)_{i \in N})$ and let $I_{> 0}$ be the sub-instance consisting only of the goods that are positively valued by some agent. Given the MNW allocation $\bfB^*$ for $I_{>0}$, we can obtain the MNW allocation $\bfA^*$ for $I$ by augmenting $\bfB^*$ so that all the zero-valued goods are given to the agent with minimum value according to $\bfB^*$. So, it remains to compute $\bfB^*$ in $I_{>0}$.  

To do this, we use the greedy algorithm \textsc{Alg-Binary} of \citet{BarmanKV18binary} which outputs an allocation maximizing the Nash welfare for the binary instance. Let $\bfB$ be the allocation that \textsc{Alg-Binary} outputs when given as input $I_{> 0}$. If $\NW(\bfB) > 0$, then $\bfB^*=\bfB$ is also $\efx_0$ by Theorem~\ref{thm:MNW-binary}. 
However, if $\NW(\bfB)=0$, in which case all allocations have zero Nash welfare, $\bfB$ might {\em not} be an MNW allocation in our sense, i.e., an allocation that maximizes the number of agents with positive value and then the product of their values. Hence, $\bfB$ may not be $\efx_0$. 
To circumvent this, we define a bipartite graph consisting of nodes corresponding to the agents on the left and nodes corresponding to the goods on the right, while an edge between an agent and a good exists if the agent has value $1$ for the good. By computing a maximum bipartite matching on this graph, it is guaranteed that the number of agents with positive value is maximized. Then, we run {\sc Alg-Binary} on the restricted sub-instance of $I_{>0}$ where the set of agents includes only the ones that participate in the maximum matching, so that the product of their values (which now is going to be positive) is also maximized. This yields the desired allocation $\bfB^*$ with maximum Nash welfare for $I_{>0}$ in which the agents that did not participate in the maximum matching get an empty set. 
\end{proof}

For general $2$-value instances we were unable to resolve the complexity of computing an MNW allocation, but we show that the problem is NP-hard for $3$-value instances. This extends the hardness aspect (but not the inapproximability) of the result of \citet{Lee2017apx} for $5$-value instances. 

\begin{theorem}\label{thm:MNW-hard}
Computing an MNW allocation is \emph{NP}-hard, even for $3$-value instances.
\end{theorem}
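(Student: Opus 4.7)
The plan is to establish NP-hardness by a polynomial-time reduction from a well-studied NP-hard problem. A natural candidate is \textsc{Exact Cover by 3-Sets} (X3C) or \textsc{3-Dimensional Matching}, both of which have a combinatorial structure that pairs naturally with the three-value restriction; another viable starting point is a structured version of \textsc{Partition} in which the items are restricted to a small number of distinct magnitudes. I would likely go with X3C, since its ``3'' lines up cleanly with the number of allowed values and does not force us to use large numerical gadgets.

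\emph{Reduction sketch.} Given an X3C instance with universe $U$ of size $3q$ and a collection $\mathcal{C}$ of 3-element subsets, I would build a fair-division instance whose goods correspond to the elements of $U$ (the \emph{element-goods}), together with a polynomial number of auxiliary \emph{padding-goods}; and whose agents comprise a \emph{set-agent} $a_C$ for each $C\in \mathcal{C}$, possibly together with a controlled number of padding-agents. The three admissible values $\{0,1,B\}$, where $B$ is an integer polynomial in the instance size, would be used as follows: $a_C$ values the three element-goods of $C$ at $B$ and all other element-goods at $0$, while padding-goods are valued at $1$ by every agent. By construction only three distinct values appear in the valuations.

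\emph{Key steps.} First, define a threshold $T$ such that the instance admits an MNW allocation of value $\geq T$ iff the X3C instance has an exact cover. Second, in the YES direction, show that allocating the three element-goods of each covering set to its corresponding set-agent and distributing padding-goods so that every other agent receives a prescribed, balanced positive utility achieves Nash welfare exactly $T$. Third, in the NO direction, argue that whenever no exact cover exists, any allocation either leaves some set-agent without its full triple of $B$-valued goods, or is forced to hand some $B$-valued good to a set-agent that values it at $0$; in both cases the product of utilities drops strictly below $T$, where the gap can be computed in polynomial time.

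\emph{Main obstacle.} The delicate part is calibrating the three values $\{0,1,B\}$ and the number of padding-goods/agents so that the multiplicative Nash welfare is simultaneously robust enough to give every agent positive utility (otherwise the whole product collapses to $0$ and the instance becomes degenerate, as was already flagged for zero-Nash-welfare instances in Section~\ref{sec:mnw}), and sensitive enough that any deviation from an exact cover is strictly penalised. Because the Nash welfare multiplies utilities, small imbalances are amplified, which helps the hardness direction; but one must make sure that the padding does not wash out these imbalances and that $B$ is large enough to dominate the padding yet small enough to keep the reduction polynomial. A subsidiary subtlety is the tie-breaking rule used in Section~\ref{sec:mnw} when the Nash welfare is zero: the reduction must be arranged so that every candidate optimal allocation gives all agents positive utility, making the MNW notion unambiguous.
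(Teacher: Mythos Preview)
Your plan differs substantially from the paper's proof, which reduces from the bounded-occurrence variant \textsc{2P2N-3SAT} rather than X3C. That is not in itself a problem---an X3C route to hardness of Nash welfare maximisation is plausible---but the sketch as written has a genuine gap in the NO direction.

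You argue that in a NO instance ``any allocation either leaves some set-agent without its full triple of $B$-valued goods, or is forced to hand some $B$-valued good to a set-agent that values it at $0$.'' But the first alternative already holds in every YES instance as well: only $q$ of the $|\mathcal{C}|$ set-agents receive their triple, and the remaining $|\mathcal{C}|-q$ set-agents live on padding. So ``some set-agent lacks its full triple'' does not separate YES from NO, and the claimed drop below $T$ does not follow. What you actually need to rule out is far more delicate: in a NO instance the element-goods can still be distributed among set-agents who value them at $B$, just not in bundles of exactly three; you must show that every such unbalanced distribution (some agents with one or two $B$-goods topped up by padding, others with padding only), once multiplied out, is strictly worse than $(3B)^q$ times the balanced padding product. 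That is an AM--GM-style inequality over many agents with mixed bundle values, and its truth depends sensitively on the number of padding-goods and on $B$; you have not specified either, nor shown the inequality.

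By contrast, the paper's reduction is engineered so that every agent has at most two or three relevant goods and a tightly constrained set of feasible values, and the threshold gap is forced by a single explicit choice of the large value $a$ (namely $a>1/(\sqrt[2m]{2}-1)$). The bounded-occurrence SAT source makes this local control possible. If you want to push the X3C route through, you will need a comparably explicit choice of $B$ and padding count, together with a real argument that any non-cover allocation has strictly smaller product; as stated, that step is missing.
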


\begin{proof}
We will prove that the problem of deciding whether there exists an allocation that achieves Nash welfare at least some value $U$ is NP-complete. Given an allocation, it is trivial to check whether its Nash welfare is at least $U$. For the hardness, we give a reduction from a special version of 3SAT, known as 2P2N-3SAT, where every variable appears twice as a positive literal and twice as a negative literal. 
This problem is known to be NP-complete~\citep{Yoshinaka2005,Berman20032p2n3sat}.
	
Consider an instance of 2P2N-3SAT, in which the set of variables is $\{x_1, \dots, x_n$\}, and the set of clauses is $\{C_1, \dots, C_m\}$. We will now describe how to construct a $3$-value instance with $2m+5n$ goods and $3m+2n$ agents, where the set of values $V$ consists of $a$, $b=1$ and $c=0$, where $a> 1/(\sqrt[2m]{2}-1)$.
	
For each variable $x_i$, introduce two variable-agents $\{T_i,F_i\}$, as well as $5$ variable-goods, denoted as $\{s_{i,0},s_{i,1},s_{i,2},s_{i,3},s_{i,4}\}$. The values of the variable-agents for the variable-goods are:
\begin{itemize}
\item $s_{i,0}$: both $T_i$ and $F_i$ have value $a$ for it; the value of all other variable-agents is $0$.
\item $s_{i,1}$ and $s_{i,2}$: $T_i$ has value $1$ for each of them; the value of all other variable-agents is $0$. 
\item $s_{i,3}$ and $s_{i,4}$: $F_i$ has value $1$ for each of them; the value of all other variable-agents is $0$. 
\end{itemize}
	
For each clause $C_j = (\ell_1 \lor \ell_2 \lor \ell_3$), where $\ell_1$, $\ell_2$ and $\ell_3$ are the three literals in the clause, introduce three clause-agents $\{C_j^1, C_j^2, C_j^3\}$ and two clause-goods $\{p_j,q_j\}$.
The values of the clause-agents for all goods as well as the values of the variable-agents for the clause-goods are as follows:
\begin{itemize}
\item Both $p_j$ and $q_j$ are valued as $a$ by the three corresponding clause-agents $C_j^1, C_j^2$ and $C_j^3$, and as $0$ by all other (variable- and clause-) agents.
		
\item For every $j$, the clause-agent $C_j^t$, $t \in [3]$ has value $0$ for all other goods, besides one:
\begin{itemize}
\item If $\ell_t = x_i$, then $C_j^t$ has value $1$ for one of two variable-goods $s_{i,1},s_{i,2}$ (whichever is not valued by some other clause-agent).
			
\item If $\ell_t = \overline{x_i}$, then $C_j^t$ has value $1$ for one of the two variable-goods $s_{i,3},s_{i,4}$ (whichever is not valued by some other clause-agent).
\end{itemize}
\end{itemize}
Note that since there are exactly two occurrences of $x_i$ and $\overline{x_i}$ in the 2P2N-3SAT instance, each of the variable-goods $s_{i,1},s_{i,2},s_{i,3},s_{i,4}$ is valued by exactly two agents (and one of them is always $T_i$ or $F_i$).
	
Given a satisfying assignment for the 2P2N-3SAT instance, we define the following allocation with Nash welfare at least $U=2^n a^{2m+n}$:
\begin{itemize}
\item Variables: If $x_i=1$, then we allocate $s_{i,0}$ to $T_i$ (for value $a$) and $\{s_{i,3},s_{i,4}\}$ to $F_i$ (for value $2$). Otherwise ($x_i=0$), we allocate $s_{i,0}$ to $F_i$ (for value $a$) and $\{s_{i,1},s_{i,2}\}$ to $T_i$ (for value $2$). 
		
\item Clauses: Let $C_j =(\ell_1 \lor \ell_2 \lor \ell_3)$. Since $C_j$ is satisfied, at least one of $\ell_1$, $\ell_2$ or $\ell_3$ is true; without loss of generality, assume that $\ell_1$ is true. Then, we allocate $p_j$ to $C_j^2$ (for value $a$), and $q_j$ to $C_j^3$ (again for value $a$). Note that there is now exactly one good valued by $C_j^1$ that is still unallocated. If $\ell_1=x_i$, then this good is one of $s_{i,1}$ or $s_{i,2}$, while if $\ell_1=\overline{x_i}$, then this good is one of $s_{i,3}$ or $s_{i,4}$. In any case, we allocate this good to $C_j^1$.
		
\item We allocate any remaining goods arbitrarily.
		
\end{itemize} 
Now observe that for any variable $x_i$ the product of the values of the two variable-agents $T_i$ and $F_i$ is at least $2a$, and for any clause $C_j =(\ell_1 \lor \ell_2 \lor \ell_3)$ the product of the values of the three clause-agents $C_j^1$, $C_j^2$ and $C_j^3$ is at least $a^2$. Consequently, the Nash welfare of the resulting allocation is at least $(2a)^n (a^2)^m = 2^na^{2m+n}$.
	
\medskip 
	
Conversely, we will now show that given any allocation with Nash welfare at least $2^na^{2m+n}$, we can obtain a satisfying assignment for the 2P2N-3SAT instance. Without loss of generality, we can assume that every good has been allocated to some agent that has positive value for it. Otherwise, we can modify the allocation so that this holds and the Nash welfare will not decrease. In particular, in any clause $C_j$, the goods $p_j$ and $q_j$ will be allocated to the agents $C_j^1, C_j^2, C_j^3$.
	
More specifically, we can assume that $p_j$ and $q_j$ have been allocated to distinct clause-agents. Suppose otherwise that $p_j$ and $q_j$ have both been allocated to the same agent, say $C_j^1$. If one of $C_j^2$ or $C_j^3$ has value $0$ (and hence the Nash welfare is $0$), then by allocating $q_j$ to that agent instead the Nash welfare will not decrease. If both $C_j^2$ and $C_j^3$ have positive value, then this must be equal to $1$. Reallocating $q_j$ to (say) $C_j^2$ will not decrease the Nash welfare. Indeed, before the modification the product of values was at most $2a+1$ and after it is at least $a(a+1)$. By our choice of $a$, we have $2a+1 \leq a(a+1)$, so the Nash welfare cannot decrease as a result of this modification.
	
By the observation above, we can assume that $s_{i,0}$ has been allocated to $T_i$ or $F_i$. If it has been allocated to $T_i$, then we can also assume that goods $s_{i,1}$ and $s_{i,2}$ have not been allocated to $T_i$, but instead that they have each been allocated to the unique other agent that values them. Let us show this for $s_{i,1}$ (the other case is identical). Assume that $s_{i,1}$ has been allocated to $T_i$ and let $C_j^t$ be the unique other agent that values it. By the above discussion, $C_j^t$ has obtained at most one of $p_j$ and $q_j$. If $C_j^t$ has value $0$, it is easy to see that allocating $s_{i,1}$ to $C_j^t$ does not decrease the Nash welfare. If $C_j^t$ has positive value, then this must be equal to $a$. Now if $T_i$ has utility $a+2$, then by allocating $s_{i,1}$ to $C_j^t$ instead, the product of values increases from $(a+2)a$ to $(a+1)^2$. On the other hand, if $T_i$ has utility $a+1$, then by allocating $s_{i,1}$ to $C_j^t$ instead, the product of values remains the same (it goes from $(a+1)a$ to $a(a+1)$). Similarly, if good $s_{i,0}$ has been allocated to $F_i$, we can assume that goods $s_{i,3}$ and $s_{i,4}$ have not been allocated to $T_i$, but instead that they have each been allocated to the unique other agent that values them.
	
We now construct an assignment for the variables as follows:
\begin{itemize}
\item If $T_i$ has obtained $s_{i,0}$, then we set $x_i=1$. 
\item If $F_i$ has obtained $s_{i,0}$, then we set $x_i=0$.
\end{itemize}
We say that the allocation satisfies the \emph{consistency} property, if the agent in $\{T_i,F_i\}$ that has not obtained $s_{i,0}$, has obtained all other goods which she values for a total value of $2$. 
	
Let us first show that if the allocation satisfies the consistency property, the assignment satisfies the 2P2N-3SAT instance. Observe that for every clause $C_j$, one of the three agents $C_j^1$, $C_j^2$ or $C_j^3$ will not obtain one of the two clause-goods which they all value as $a$. Thus, this agent has to obtain the single other variable-good which she positively values, otherwise the Nash welfare would be $0$. By the consistency property, it follows that the literal associated to this agent must be satisfied. Thus, the clause is satisfied.
	
Finally, let us show that if the Nash welfare of the allocation is at least $2^na^{2m+n}$, then the consistency property holds. Assume that there exists an $i$ such that the agent in $\{T_i,F_i\}$ that has not obtained $s_{i,0}$, has value strictly less than $2$, i.e.\ at most $1$. Thus, the product of the values of $T_i$ and $F_i$ is at most $a$. For any other $i'$, the product of the values of $T_{i'}$ and $F_{i'}$ is at most $2a$. For any clause $C_j$, the product of the values of the three corresponding clause-agents is at most $(a+1)^2$. Thus, altogether the Nash welfare is at most $(a+1)^{2m} (2a)^{n-1}a = 2^{n-1}(a+1)^{2m}a^n$. By our choice of $a > 1/(\sqrt[2m]{2}-1)$, we have ensured that $(a+1)^{2m} < 2a^{2m}$, and so the Nash welfare is indeed strictly less than $2^na^{2m+n}$.
\end{proof}

%%%%%%%%%%%%%%%%%%%%%%%
%%%%%%%%%%%%%%%%%%%%%%%

\section{Computing EFX Allocations for Restricted Domains}\label{sec:restricted}
Even though we showed that any MNW allocation is also $\efx_0$ for agents with $2$-value valuation functions, it remains an open question whether there exists a polynomial-time algorithm for computing such allocations beyond the binary case. In this section, we try to circumvent this and aim to design efficient algorithms for computing $\efx_0$ allocations (which might not maximize the Nash welfare) for $2$-value instances and interval-value instances.

\subsection{2-value instances}
We begin with considering $2$-value instances with values $\{a,b\}$ such that $a > b \geq 0$. 
Our algorithm, which we call {\sc Match\textnormal{\&}Freeze}, proceeds in rounds and maintains a set of \emph{active} agents $L$, initially containing everyone. In each round, every active agent is given exactly one of the remaining goods, with the possible exception of the last round in which there might not be enough goods left for all agents.
The algorithm terminates when all goods have been allocated.  

\begin{algorithm}[h!]
	\caption{{\sc Match\textnormal{\&}Freeze}$(N,M,(v_i)_{i \in N})$}
	\begin{algorithmic}[1] 
		\State Input: a 2-value instance using the values $a,b$ ($a > b \geq 0$)
		\State $L \leftarrow N$ \Comment{set of active agents}
		\State $R \leftarrow M$ \Comment{set of unallocated goods}
		\State $\ell=(1,2,\ldots,n)$ \Comment{ordered list of agents}
		\While{$R \neq \emptyset$} \Comment{every iteration is a round}
		\State Construct the bipartite graph $G = (L \cup R,E)$.
		\State Compute a maximum matching on $G$. 
		\For{each matched pair $(i,g)$}
		\State Allocate good $g$ to agent $i$.
		\State Remove $g$ from $R$.
		\EndFor
		\For{each unmatched active agent $i$ w.r.t.~$\ell$}
		\State Allocate \emph{one} arbitrary unallocated good $g$ to $i$.
		\State Remove $g$ from $R$.
		\EndFor
		\State Construct the set $F$ of agents that need to freeze.
		\State Remove agents of $F$ from $L$ for the next $\lfloor a/b - 1 \rfloor$ rounds.
		\State Put agents of $F$ to the end of $\ell$.
		\EndWhile
		\State \Return the resulting allocation $\bfA$.
	\end{algorithmic}
	\label{alg:efx-2-value}
\end{algorithm}

To determine which good each active agent gets during a round, we create a bipartite graph $G=(L \cup R, E)$ with nodes corresponding to the active agents $L$ on one side and to the remaining goods $R$ on the other. 
An edge between an active agent $i$ and a good $g$ exists if and only if $v_i(g) = a$.
We first compute a maximum matching on this graph. Then each agent gets the good to which she is matched. If there are agents who are not matched to any good and there are still available goods, the unmatched agents receive one arbitrary available good each (subject to availability). 

There are two possible reasons why an agent $i$ is not matched to any good in a round: (1) she does not have value $a$ for any good (only $b$), or (2) the maximum matching is such that all goods for which her value is $a$ are given to other agents. 
Case (1) does not affect whether the final allocation will be $\efx_0$, but case (2) is  crucial. This is because agent $i$ might now have much smaller value for her own bundle compared to her value for the bundles of some agents that \emph{just received} one good each that $i$ values as $a$.
Let $Z$ be the set of these agents. To make up the distance, agent $i$ should possibly receive multiple goods of value $b$ while all agents in $Z$ must {\em freeze} 
for a number of subsequent rounds depending on the ratio $a/b$.

We define the set $F$ of agents that need to freeze at the end of round $r$ to consist of all those agents who must become inactive  because they have obtained too much value from the perspective of other agents (similarly to Case (2) above). 
Formally, for every active agent $i$, let $g_i$ be the good she gets in round $r$. We begin by setting $F =  \{i \in L \,|\, \allowbreak\exists j \in L : v_j(g_i) = a, v_j(g_j) = b\}$. Then, iteratively, as long as there is an agent $i \in L \setminus F$ such that there exists $j \in F$ with $v_j(g_i) = a$, we also add $i$ to $F$. Each agent in $F$ will remain \emph{frozen} for the next $\lfloor a/b - 1 \rfloor$ rounds. In the case $b=0$, we use the interpretation $\lfloor a/b - 1 \rfloor = + \infty$ in which case agents in $F$ remain frozen forever. Exploiting the properties of the maximum matchings used to allocate goods we can prove that no agent in $F$ will become envious while frozen, and that $F$ is always a strict subset of $L$. The latter means that there is at least one (non-frozen) active agent at any time, and thus the algorithm will terminate after at most $m$ rounds. 

\begin{theorem}\label{thm:2v-alg}
For any $2$-value instance, {\sc Match\textnormal{\&}Freeze} computes an $\efx_0$ allocation in polynomial time. 
\end{theorem}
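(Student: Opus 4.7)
The plan is to prove the theorem by induction on the round index, maintaining three invariants at the end of every round: (I1) the current partial allocation is $\efx_0$; (I2) every agent currently in the freeze set $F$ will remain $\efx_0$ towards all other agents throughout her whole $\lfloor a/b-1 \rfloor$-round freezing window; and (I3) $F \subsetneq L$ in every round. Invariant (I1) applied after the final round yields the required $\efx_0$ output, while (I3) gives polynomial-time termination: because $L\setminus F$ is always non-empty, at least one good is removed per round, so there are at most $m$ rounds, each costing one bipartite maximum-matching computation.

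Two elementary structural facts about the matching step are the workhorses. (a)~If an active agent $i$ is unmatched in round $r$, then every still-unallocated good $g$ satisfies $v_i(g)=b$, otherwise $(i,g)$ would be an augmenting path of length one, contradicting the maximality of the matching; thus a fallback good received by an unmatched active agent is always worth $b$ to her. (b)~A matched agent receives a good she values at $a$. So every active agent gains exactly $a$ or exactly $b$ in each round. These facts also drive (I3): assuming $F=L$, one unrolls the iterative definition of $F$ from a seed pair $(i,j)$ with $v_j(g_i)=a$ and $v_j(g_j)=b$; since each subsequent addition to $F$ provides a value-$a$ edge from a new agent to an already-frozen agent's matched good, the resulting structure can be assembled into an augmenting path in $G$, contradicting maximality. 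Making this augmenting-path argument precise will be the first main technical hurdle.

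For (I1) and (I2) I would induct jointly. Fix a pair $(i,j)$ and analyze how round $r$ changes their $\efx_0$ status. The only nontrivial case is when $j$ receives a new good $g$ with $v_i(g)=a$; in every other case (both $i,j$ frozen, $v_i(g)=b$, or both active with matching gains) the desired inequality $v_i(A_i)\ge v_i(A_j\setminus g^*)$ reduces directly to the previous round's inequality by adding or subtracting equal amounts of $a$ or $b$ from both sides. In the critical case, if $i$ was matched in round $r$ she also gained $a$ and the inductive inequality upgrades to the new round. If $i$ was unmatched, then by construction $j$ enters $F$ this round, and the work shifts to maintaining (I2) for $j$: her bundle is frozen for $\lfloor a/b-1\rfloor$ rounds while in each such round $i$ either gains $a$, gains $b$ by observation (a), or is herself frozen with her bundle already adequate by the inductive (I2). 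Over $\lfloor a/b\rfloor$ such rounds $i$ accumulates at least $\lfloor a/b\rfloor \cdot b$, which is enough to dominate the residual $a$-surplus of $A_j\setminus g^*$ from $i$'s perspective. The edge case $b=0$ is handled separately: then $\lfloor a/b-1\rfloor=\infty$, $j$ never thaws, and by observation (a) every good in $A_j$ other than $g$ is worth $0$ to $i$, so $\efx_0$ is trivial.

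The main obstacle is the bookkeeping in (I2) when $i$ is herself frozen for a different reason during some of $j$'s freezing window, requiring a nested use of the invariant. I would handle this by strong induction on the round index, attaching to each frozen agent an explicit witness recording why her $\efx_0$ comparisons are already balanced, and then composing these witnesses across overlapping freezes, using (I3) to guarantee that some agent is always receiving goods and hence that every comparison is eventually reduced to the inductive base case.
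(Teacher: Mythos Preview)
Your round-by-round invariant approach differs from the paper's, which instead proves a single structural lemma and then does a one-shot case analysis at a specific round. The paper's key lemma is precisely the insight you are missing: \emph{each agent $i$ can freeze at most once, only at the round $r_i$ in which the last good she values at $a$ is allocated, and only if she herself received a value-$a$ good in that round}. The proof is an augmenting-path argument close in spirit to your (I3) sketch: if $i$ freezes, there is a chain $i=i_0,i_1,\dots,i_k$ with $v_{i_\ell}(g_{i_{\ell-1}})=a$ and $v_{i_k}(g_{i_k})=b$; if additionally $v_i(g_i)=b$ or $r\neq r_i$, one obtains a strictly larger matching. This lemma instantly dissolves the ``nested freeze'' obstacle you flag: in your critical case ($i$ unmatched in round $r$, $j$ receives a good $i$ values at $a$), your own observation (a) gives $r=r_i$, and since $v_i(g_i)=b$ the lemma says $i$ \emph{never} freezes. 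So the contingency ``$i$ is herself frozen for a different reason during $j$'s freezing window'' simply cannot occur, and the witness-composition machinery you propose is unnecessary. Without this lemma, your plan to handle overlapping freezes by ``composing witnesses across overlapping freezes'' via strong induction is not yet a proof; you have correctly located the difficulty but not resolved it.

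There is also a concrete error in your $b=0$ paragraph. You claim that observation (a) forces every good in $A_j$ other than $g_j$ to be worth $0$ to $i$, but (a) only speaks about goods \emph{still unallocated after} round $r$; the goods $j$ collected in rounds $1,\dots,r-1$ may well be worth $a$ to $i$. The correct argument is that $i$ was matched in every round before $r$ (again by the structural lemma), so $v_i(A_i)=a(r-1)$ while $|A_j|=r$; a short case check on how many of $j$'s $r$ goods are worth $a$ to $i$ then gives $\efx_0$. More generally, once you have the freeze-at-most-once lemma, the whole proof collapses to: at the moment an agent re-enters (or at round $r_i+1$ if she never froze) she is envy-free towards everyone, and from then on both sides gain at most $b$ per round, so the final allocation is $\efx_0$.
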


\begin{proof}
Let $\bfA = (A_i)_{i \in N}$ be the allocation outputted by {\sc Match\textnormal{\&}Freeze} when given as input a $2$-value instance $I = (N,M,(v_i)_{i \in N})$ such that $v_i(g) \in \{a,b\}$, $a > b \geq 0$, for every $i \in N$ and $g \in M$. 
For any agent $i$, let $r_i$ be the round in which the last goods of value $a$ for agent $i$ were allocated. We will need the following two lemmas.

\begin{lemma}\label{clm:from-matching}
For every agent $i$, it holds that:
\begin{itemize}
\item She was allocated a good for which she has value $a$ in each of the rounds $1,2, \dots, r_i-1$.
\item She can freeze only at the end of round $r_i$, and only if during that round she got a good for which she has value $a$. \item She can freeze at most once. After freezing, she has value $b$ for each of the remaining goods.
\end{itemize}
\end{lemma}

\begin{proof}[Proof of Lemma \ref{clm:from-matching}]
For the first part of the lemma, assume that agent $i$ was allocated a good she values as $b$ in some round $r < r_i$. This would mean that agent $i$ was left unmatched in round $r$. However, in round $r_i > r$ there were still goods that $i$ values as $a$ available. Consequently, agent $i$ could also have been matched in round $r$, which means that we did not use a maximum matching, a contradiction.

For the second and third part of the lemma, consider an agent $i$ that got frozen at the end of some round $r$. Since agent $i = i_0$ got frozen, there exist distinct agents $i_1, \dots, i_k$ ($k \leq n$) such that $v_{i_{\ell}}(g_{i_{\ell-1}}) = a$ for $\ell \in [k]$, and $v_{i_k}(g_{i_k}) = b$. Note that agent $i_k$ was not matched in round $r$. Let us now show that the following cases are impossible:

\begin{itemize}
\item $v_i(g_i) = b$. 
This means that agent $i=i_0$ was also not matched in round $r$, and out of the agents $i_0, \dots, i_k$ at most $k-1$ were matched in round $r$. However, since $v_{i_{\ell}}(g_{i_{\ell-1}}) = a$ for every $\ell \in [k]$, we could match agent $i_\ell$ to good $g_{\ell-1}$, and obtain a matching of size $k$ rather than at most $k-1$, a contradiction. Therefore, it must be $v_i(g_i) = a$.

\item $r > r_i$. 
By the definition of $r_i$, agent $i$ must have been allocated a good she values as $b$ in round $r$, which is impossible as we showed above. Hence, it must be $r \leq r_i$.

\item $r < r_i$.
By the definition of $r_i$, at the end of round $r$ there exists an unallocated good $g^*$ such that $v_i(g^*)=a$. Since $v_{i_k}(g_{i_k})=b$, $k$ of the agents $i_0, \dots, i_k$ were matched in round $r$. However, by matching agent $i_\ell$ to good $g_{\ell-1}$ for each $\ell \in [k]$, and $i=i_0$ to $g^*$, we can obtain a larger matching of size $k+1$ rather than $k$, a contradiction. Therefore, it must be $r \geq r_i$. 
\end{itemize}
By the last two cases, we have that agent $i$ can only get frozen in round $r=r_i$, and by the definition of $r_i$, the value of $i$ for any of the remaining goods is $b$. 
\end{proof}

\begin{lemma}\label{clm:ef-to-efx}
If at the beginning of some round $r > r_i$, an active agent $i$ is envy-free towards agent $j$, then agent $i$ will be $\efx_0$ towards agent $j$ at the end of the algorithm.
\end{lemma}

\begin{proof}[Proof of Lemma \ref{clm:ef-to-efx}]
By the second part of Lemma~\ref{clm:from-matching}, agent $i$ is active during all rounds after $r > r_i$ until the end of the algorithm, and $i$ has value $b$ for all remaining goods. Consequently, agent $i$ will be allocated a good she values as $b$ in each subsequent round, except potentially the last round (during which she may not get any good), while the value of $i$ for the bundle of agent $j$ can increase by at most $b$ in each subsequent round as well. Consequently, when the algorithm terminates, $v_i(A_j)$ can be at most $b$ more than $v_i(A_i)$. 
\end{proof}

It is easy to see that the algorithm terminates in polynomial time. If no agent ever gets frozen, then the algorithm terminates after at most $\lceil m/n \rceil$ rounds. Otherwise, let $r$ be the first round in which an agent gets frozen, i.e., there exists some agent $j$ with $v_j(g_j) = b$ and some agent $i$ with $v_j(g_i)=a$. By the definition of $r_j$ and Lemma~\ref{clm:from-matching}, it follows that $r=r_j$. Furthermore, again by Lemma~\ref{clm:from-matching}, we know that agent $j$ did not get frozen in round $r=r_j$ (because $v_j(g_j) = b$), which means that she never gets frozen. Since $j$ gets a good in every round, the algorithm terminates after at most $m$ rounds.

Let us now show that the algorithm constructs an $\efx_0$ allocation. If some agent $i$ has value $b$ for all goods, then by the argument in the proof of Lemma~\ref{clm:ef-to-efx}, agent $i$ will be $\efx_0$ at the end of the algorithm. 
If there is at least one good which $i$ values as $a$, then $r_i$ is well-defined. 
Recall that by Lemma~\ref{clm:from-matching}, it holds that agent $i$ is allocated a good she values as $a$ in each of the rounds $1,2, \dots, r_i-1$. In round $r_i$ there are two cases:

\medskip

\noindent\emph{\underline{Case I:} Agent $i$ is allocated a good she values as $a$.} 
Then, at the end of round $r_i$, agent $i$ has total value $a r_i$ for her bundle and total value at most $a r_i$ for the bundle of any other agent. If agent $i$ did not become frozen at the end of round $r_i$ (which means she never will), then she will be $\efx_0$ towards all agents at the end of the algorithm by Lemma~\ref{clm:ef-to-efx}. 
Now consider the case where agent $i$  froze at the end of round $r_i$.
\begin{itemize}
\item Any agent $j$ who obtained a good that $i$ values as $a$ in round $r_i$ will also freeze at the end of round $r_i$ (by the definition of the set $F$ of frozen agents), and both $i$ and $j$ will re-enter in the same round later. 
Thus, since $i$ is envy-free towards $j$ up until round $r_i$ and both agents will never freeze again, $i$ will be $\efx_0$ towards $j$ at the end of the algorithm by Lemma~\ref{clm:ef-to-efx}.

\item For any agent $j$ who obtained a good that $i$ values as $b$, the value of $i$ for the bundle of $j$ at the end of round $r_i$ is at most $a(r_i-1) + b$. Hence, when agent $i$ re-enters (after the end of round $r_i+\lfloor a/b-1 \rfloor$), her value for $j$'s bundle can be at most $a(r_i-1) + b + b \lfloor a/b-1 \rfloor \leq a r_i$. Thus, agent $i$ will be envy-free towards agent $j$ at this point, and $\efx_0$ towards $j$ at the end of the algorithm by Lemma~\ref{clm:ef-to-efx}. 
If the algorithm terminates before agent $i$ re-enters (which could happen if $b=0$), then agent $i$ is envy-free towards $j$.
\end{itemize}

\medskip

\noindent\emph{\underline{Case II:} Agent $i$ is allocated a good she values as $b$.}
At the end of round $r_i$ her total value is $a(r_i-1)+b$, and will never freeze by Lemma~\ref{clm:from-matching}.

\begin{itemize}
\item For any agent $j$ who obtained a good that $i$ values as $b$ in round $r_i$, the value of $i$ for $j$'s bundle can be at most $a(r_i-1)+b$. Thus, agent $i$ is envy-free towards $j$ at the end of $r_i$, and $\efx_0$ at the end of the algorithm by Lemma~\ref{clm:ef-to-efx}.

\item Any agent $j$ who obtained a good that $i$ values as $a$ in round $r_i$ must have  frozen at the end of round $r_i$. If the algorithm terminates before agent $j$ re-enters (for example, if $b=0$), then agent $i$ is $\efx_0$ with respect to $j$, because the value of $i$ for $j$'s bundle is at most $a r_i$ and the least valuable good from $i$'s perspective is of value $a$. 
Otherwise, agent $j$ re-enters before the termination of the algorithm after the end of round $r_i+\lfloor a/b-1 \rfloor$, and the value of $i$ for $j$'s bundle is still at most $a r_i$, as $j$ did not receive any other good. Since $i$'s own value increased to $a(r_i-1)+b + b \lfloor a/b-1 \rfloor > a r_i - b$, her envy towards agent $j$ is at most $b$ at this point. 
Furthermore, $i$ has value $b$ for the remaining goods. 
Because in the last round we prioritize agents who have never gotten frozen (which means that if there are less goods than agents, $i$ has priority in getting an available good compared to $j$), by an argument similar to the one used in the proof of Lemma~\ref{clm:ef-to-efx}, it follows that the envy will still be at most $b$ at the end of the algorithm.
\end{itemize}
This completes the proof.
\end{proof}

\subsection{Interval-value instances} 
From our discussion thus far, it may seem like the difficultly of proving the existence of $\efx_0$ allocations is directly related to the number of different values that the agents have, but this is not entirely true. We will now show that the range between the lowest and the highest value also plays a very important role: for specific ranges, and independently of the number of values therein (which can be infinite), computing $\efx$ allocations can be achieved by very simple algorithms. 
In particular, we show that $\efx$ allocations exist for interval-instances in which the values of each agent $i$ are in some interval $[x_i,2x_i]$, $x_i \in \mathbb{R}_{>0}$, by using a simple modification of the \emph{round-robin algorithm}. 

According to this algorithm, we fix an ordering of the agents and then they simply pick their favorite unallocated good one by one, with respect to that ordering. This continues in rounds of $n$ goods each, until we reach a point where there are not enough goods for everyone. For this last round (if it exists), the agents pick in reverse order (see Algorithm~\ref{alg:MRR}).

\begin{algorithm}
\caption{Modified round-robin}
\begin{algorithmic}[1]
\State Input: Instance with $m=kn+\ell$, $k \geq 0$, $0\leq \ell < n$ 
\State $S=M$ 
\For{$r = 1, \dots, k$}
\For{$i = 1, \dots, n$}
\State $g \in \arg\max_{q \in S}v_i(q)$
\State $A_i = A_{i} \cup \{g\}$
\State $S = S\setminus \{g\}$
\EndFor
\EndFor
\For{$i = n, \dots, n-\ell+1$}
\State $g \in\arg\max_{q \in S}v_i(q)$
\State $A_i=A_i \cup \{g\}$
\State $S=S\setminus \{g\}$
\EndFor 
\State 
\Return {$\bfA=(A_1,\dots,A_n)$}
\end{algorithmic}
\label{alg:MRR}
\end{algorithm}

\begin{theorem}\label{thm:modifiedRR}
Given an interval-instance in which the values of agent $i$ are in the interval $[x_i,2x_i]$, $x_i \in \R_{> 0}$, \emph{Modified round-robin} computes an EFX allocation in polynomial time.
\end{theorem}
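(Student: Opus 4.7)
The plan is to show, for every pair of agents $i, j \in N$ with $A_j \neq \varnothing$, that $v_i(A_i) \geq v_i(A_j \setminus \{g\})$ for every $g \in A_j$. Since every good has value at least $x_i > 0$ to agent $i$, it suffices to verify this when $g = g^{*} \in \arg\min_{g \in A_j} v_i(g)$, because removing the least-valued good leaves the largest possible $v_i$-mass in $A_j \setminus \{g\}$. I will write $m = kn + \ell$ with $0 \leq \ell < n$, and let $g^{(r)}_p$ denote the good agent $p$ picks in round $r$; agent $p$ picks in full rounds $1,\ldots,k$ at position $p$, and additionally in the reversed partial round $k+1$ iff $p \geq n-\ell+1$. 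The argument then splits by whether $i < j$ or $i > j$, and by which of the two appears in the partial round.

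When $i < j$, in each full round $i$ picks before $j$, so $v_i(g^{(r)}_i) \geq v_i(g^{(r)}_j)$ for all $r \leq k$. If neither is in the partial round this gives envy-freeness outright. If both are, then $j$ picks before $i$ in round $k+1$, but the interval bound caps the slippage by $v_i(g^{(k+1)}_j) - v_i(g^{(k+1)}_i) \leq 2x_i - x_i = x_i \leq v_i(g^{*})$. When only $j$ is in the partial round and $g^{*} = g^{(r^{*})}_j$ for some $r^{*} \leq k$, I will observe that $g^{(k+1)}_j$ was still unallocated at the moment $i$ made her round-$r^{*}$ pick, so $i$'s greedy choice gives $v_i(g^{(r^{*})}_i) \geq v_i(g^{(k+1)}_j)$; plugging this into a swap yields $v_i(A_i) \geq v_i(A_j) - v_i(g^{*})$. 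The subcase $g^{*} = g^{(k+1)}_j$ is immediate from the full-round inequalities.

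When $i > j$, the full-round comparison reverses, so I will rely on a \emph{shift inequality}: $i$'s round-$r$ pick happens at step $(r-1)n + i$, which is strictly before $j$'s round-$(r+1)$ pick at step $rn + j$ (since $i \leq n < n + j$), hence $v_i(g^{(r)}_i) \geq v_i(g^{(r+1)}_j)$. A similar step check ($i + j \leq 2n$) shows the inequality also holds when $j$'s round-$(r+1)$ pick is in the reversed partial round. I will then match $g^{(r)}_i$ with $g^{(r+1)}_j$ across the appropriate range, leaving $g^{(1)}_j$ uncovered on $j$'s side and either one or two ``extra'' partial-round goods on $i$'s side. Combining $v_i(g^{(1)}_j) \leq 2x_i$ with $v_i(\cdot) \geq x_i$ for each leftover good on $i$'s side then yields $v_i(A_i) \geq v_i(A_j) - x_i$ in the tight subcases (both in the partial round or neither) and $v_i(A_i) \geq v_i(A_j)$ outright when only $i$ is in the partial round. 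In every case the bound $v_i(g^{*}) \geq x_i$ closes the argument.

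The main obstacle will be the partial-round bookkeeping: verifying the step-index inequality that makes the shift valid when $j$'s next pick is in the reversed partial round, and correctly counting the leftover goods on each side of the shifted matching so that the interval hypothesis supplies exactly the required $x_i$-worth of slack. The factor-of-$2$ interval bound plays a single crisp role throughout — it caps the value gap produced by any single-round misalignment at $x_i$, which is precisely the minimum possible $v_i$-value of a good in $A_j$. Polynomial running time is immediate since the algorithm performs $m$ picks, each requiring at most $O(m)$ comparisons.
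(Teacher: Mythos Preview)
Your proposal is correct and follows essentially the same approach as the paper: the same case split on $i<j$ versus $i>j$, the same per-round greedy inequality $v_i(g^{(r)}_i)\ge v_i(g^{(r)}_j)$ (resp.\ the shifted version $v_i(g^{(r)}_i)\ge v_i(g^{(r+1)}_j)$), and the same use of the interval hypothesis to absorb the single uncovered good $g^{(1)}_j$.

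One small point of comparison: in Case~II the paper first removes $g^{*}=g_{jt}$ and then performs a two-phase shift (matching $g_{ir}\mapsto g_{j,r+1}$ for $r\le t-2$ and $g_{ir}\mapsto g_{j,r+2}$ for $r\ge t-1$), which leaves \emph{two} goods on $i$'s side to dominate $g^{(1)}_j$ via $2x_i\ge v_i(g^{(1)}_j)$. Your version applies a single uniform shift without removing $g^{*}$, obtains $v_i(A_i)\ge v_i(A_j)-x_i$ directly, and only then invokes $v_i(g^{*})\ge x_i$. Both bookkeepings are valid; yours is arguably a touch cleaner. Note, however, that your phrase ``one or two extra partial-round goods on $i$'s side'' is slightly inaccurate in the subcase where neither agent is in the partial round---the single leftover there is $g^{(k)}_i$, an ordinary full-round good---though this does not affect the argument.
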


\begin{proof}
Let $\bfA=(A_1,\dots, A_n)$ be the allocation produced by Algorithm~\ref{alg:MRR}. First observe that if $m < n$ then the statement holds trivially. Hence, we assume that $m=kn+\ell$ for some $k \geq 1$ and $0 \leq \ell < n$. In this case, the first $n-\ell$ agents will get $k$ goods and the last $\ell$ agents will get $k+1$ goods. Consider now an agent $i$ and let $g_{ir}$ be the good that she gets in round $1 \leq r \leq k+1$. We will show that $i$ is $\efx$ towards any other agent $j$, by distinguishing between cases depending on whether $i$ selects before or after $j$ according to the main ordering of the algorithm.
\medskip 

\noindent\emph{\underline{Case I:} $i < j$.}
Agent $j$ either has the same number of goods as $i$, or one more good than $i$. 
\begin{itemize}
\item If both agents get $k$ goods, since $i$ always chooses her most-valuable good before $j$, we have that $v_i(g_{ir}) \geq v_i(g_{jr})$ for every $r \in [k]$, and thus $i$ does not envy $j$. 

\item If both agents get $k+1$ goods or agent $i$ has $k$ goods and agent $j$ has $k+1$ goods, let $g_{jt} \in A_j$ be the least-valuable good according to agent $i$, which $j$ gets during round $t \in [k+1]$. 
If $t=k+1$, then since $v_i(g_{ir}) \geq v_i(g_{jr})$ for every $r \in [k]$, we immediately obtain that $v_i(A_i) \geq v_i(A_j \setminus \{g_{i,k+1}\})$. 
If $t \leq k$, since $v_i(g_{it}) \geq v_i(g_{i,k+1})$ and $v_i(g_{ir}) \geq v_i(g_{jr})$ for every $r \in [k]\setminus\{t\}$, we again obtain that $v_i(A_i) \geq v_i(A_j \setminus \{g_{it}\})$.   
\end{itemize}
\medskip 

\noindent\emph{\underline{Case II:} $i > j$.}
Agent $j$ has either the same number of goods as $i$ or one less good than $i$. 
Once again, let $g_{jt} \in A_j$ be the least-valuable good according to agent $i$. Now observe that in general we have that $v_i(g_{ir}) \geq v_i(g_{j,r+1})$ for every $r \in [k-1]$. If $t = 1$, then the statement holds trivially. So, we assume that $t \geq 2$. 
\begin{itemize}
\item 
If agent $j$ gets $k$ goods, then we have that $v_i(g_{ir}) \geq v_i(g_{j,r+1})$ for every $1 \leq r \leq t-2$ and $v_i(g_{ir}) \geq v_i(g_{j,r+2})$ for every $t-1 \leq r \leq  k-2$. So far we have bounded all goods in $A_j \setminus \{g_it\}$ besides $g_{j1}$, using all goods in $A_i$ besides $g_{i,k-1}$ and $g_{i,k}$ (and also $g_{i,k+1}$ if it exists). Since all the values lie in the interval $[x_i,2x_i]$, we have that $v_i(\{g_{ik-1}) + v_i(g_{ik}\}) \geq v_i(\{g_{j1}\})$, which yields that $v_i(A_i) \geq v_i(A_j \setminus \{g_{it}\})$. 

\item
If both agents get $k+1$ goods, then we have that $v_i(g_{ir}) \geq v_i(g_{j,r+1})$ for every  $1 \leq r \leq t-2$ and $v_i(g_{ir}) \geq v_i(g_{j,r+2})$ for every $t-1\leq r \leq k-1$. Again, the values of the goods in $A_j \setminus \{g_it\}$ besides $g_{j1}$ have been bounded by the values of the goods in $A_i$ besides $g_{ik}$ and $g_{i,k+1}$. Since all the values lie in the interval $[x_i,2x_i]$, we have that $v_i(\{g_{ik}) + v_i(g_{i,k+1}\}) \geq v_i(\{g_{j1}\})$, and thus $v_i(A_i) \geq v_i(A_j \setminus \{g_{it}\})$.
\end{itemize}
This completes the proof.
\end{proof}

%%%%%%%%%%%%%%%%%%%%%
%%%%%%%%%%%%%%%%%%%%%

\section{MNW and the EFX-value}\label{sec:approx}
As we saw in Section~\ref{sec:mnw}, maximizing the Nash welfare does not yield an $\efx$ allocation in general. 
Here we take a different route and instead of considering exact $\efx$ allocations, we focus on approximation. 
We start by showcasing that maximizing the Nash welfare does not guarantee \emph{any} meaningful approximation of $\efx$ according to the current definition of approximation used in the literature~\citep{PlautR18,AmanatidisBM18,AMN2020,ChanCLW19maximin}.

\begin{defn}[$\alpha$-$\efx$ allocation]
For $\alpha \in (0,1]$, an allocation $\bfA$ is $\alpha$-$\efx$ if for every pair $i,j \in N$ and every good $g \in A_j$ such that $v_i(g)>0$, it holds that $v_i(A_i) \geq \alpha v_i(A_j \setminus \{g\})$.
\end{defn}

\noindent
Let $w > 1$ and $\varepsilon < \frac{1}{2w}$. Consider the following very simple instance with two agents, three goods, and values given in the table:
\begin{center}
\begin{tabular}{c?cccc}
 & $g_1$ & $g_2$ & $g_3$ \\
\noalign{\hrule height 1pt}
agent $1$ & $w$ & $0$ & $1/2$ \\
agent $2$ & $w$ & $1$ & $\varepsilon$ \\
\end{tabular}
\end{center} 
We first claim that the allocation $\bfA^*=(A_1=\{g_1,g_3\}, \allowbreak A_2= \allowbreak \{g_2\})$ is the only one that achieves the maximum Nash welfare of $w + 1/2$. Indeed, the Nash welfare of any allocation that gives $g_2$ to agent $1$ can only increase by moving $g_2$ to agent $2$, while any allocation other than $\bfA^*$ that gives $g_2$ to agent $2$ has Nash welfare either  $(w + 1)/2$ or $w+\varepsilon w < w + 1/2$. Notice, however, that  $\bfA^*$ is not $\efx$ since $v_2(A_2)=1 < w = v_2(A_1 \setminus \{g_3\})$. Instead, it is only $1/w$-$\efx$, an approximation factor that can be arbitrarily close to zero as $w$ becomes large. 

Nevertheless, $\bfA^*$ is not that far away from being an $\efx$ allocation! To see this, consider the allocation $\bfB = (B_1=\{g_1\},B_2=\{g_2,g_3\})$ that is obtained from $\bfA^*$ by only moving $g_3$ from agent $1$ to agent $2$. Clearly, agent $2$ is $\efx$ towards agent $1$, as the latter gets only one good. Moreover, the value  agent $2$  has now is $v_2(B_2)=1 + \varepsilon$, which is extremely close to the value $v_2(A_2)=1$ that agent $2$ has in $\bfA^*$. So, even though $\bfA^*$ is  $1/w$-$\efx$ because $v_2(A_2)$ is very low compared to $v_2(A_1 \setminus \{g_3\})$, $v_2(A_2)$ is actually very close to the value she would have in a \emph{nearby} $\efx$ allocation. Consequently, if we accept that agent $2$ considers the $\efx$ allocation $\bfB$ as fair, then she should also consider $\bfA^*$ as being {\em almost} fair.    

We say that the value $v_2(B_2) = 1 + \varepsilon$ is the {\em $\efx$-value} that agent $2$ can achieve by augmenting her bundle with a subset of goods from agent $1$ in order to create the closest-to-$\bfA^*$ (in terms of value) allocation $\bfB$ which she considers as $\efx$. Then, since $v_2(A_2) = \frac{1}{1+\varepsilon}v_2(B_2)$, agent $2$ achieves an approximation of $\frac{1}{1+\varepsilon}$ of her $\efx$-value. Let us now formalize these notions for any number of agents. 

\begin{defn}[$\efx$-value]
Let $\bfA = (A_1,\dots,A_n)$ be an allocation. For every pair of agents $i,j \in N$, let $X_{ij} \subseteq A_j$ be a set of goods such that $v_i(A_i\cup X_{ij}) \geq v_i(A_j \setminus (X_{ij} \cup \{g\}))$ for every $g \in A_j \setminus X_{ij}$ and $v_i(A_i \cup X_{ij})$ is minimized. Then, the {\em $\efx$-value} of agent $i$ is 
$$\xv_i(\bfA) = \max_{j \in N \setminus \{i\}} v_i(A_i \cup X_{ij}).$$
\end{defn}

\begin{defn}[$\alpha$-$\vefx$ allocation]
For $\alpha \in (0,1]$, an allocation $\bfA$ is $\alpha$-$\vefx$ if $v_i(A_i) \geq \alpha \xv_i(\bfA)$ for every $i \in N$.
\end{defn}

\noindent
We remark that using $\efx_0$ instead of $\efx$ in the above definitions does not make any difference since adding zeros does not affect the $\efx$-value. Furthermore, observe that a $1$-$\vefx$ allocation is an $\efx$ allocation but not necessarily an $\efx_0$ allocation.  

Our first technical result in this section illustrates the connection between approximate $\efx$ and $\vefx$ allocations. 

\begin{theorem}\label{thm:efx-efx}
For $\alpha\in (0,1)$, an $\alpha$-$\efx$ allocation is also an $\frac{\alpha}{1+\alpha}$-$\vefx$ allocation and this guarantee is tight.
On the other hand, an $\alpha$-$\vefx$ allocation is not guaranteed to be $\beta$-$\efx$, for any $\alpha, \beta \in (0,1)$. 
\end{theorem}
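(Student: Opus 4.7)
Fix any agent $i$ and let $j^* \in \arg\max_{j \neq i} v_i(A_i \cup X_{ij})$, writing $X = X_{ij^*}$. Simple algebra shows that the desired inequality $v_i(A_i) \geq \frac{\alpha}{1+\alpha} \xv_i(\bfA)$ is equivalent to $v_i(A_i) \geq \alpha\, v_i(X)$. The case $v_i(X) = 0$ is immediate, so I focus on $v_i(X) > 0$. The proof then reduces to the following structural claim: \emph{if $X$ is optimal and $v_i(X) > 0$, then $A_{j^*} \setminus X$ contains a good $g^+$ with $v_i(g^+) > 0$.} Given this, applying the $\alpha$-$\efx$ condition to the pair $(i, j^*)$ at $g^+$ yields
\[
v_i(A_i) \;\geq\; \alpha\, v_i(A_{j^*} \setminus \{g^+\}) \;=\; \alpha\bigl(v_i(X) + v_i(A_{j^*} \setminus X) - v_i(g^+)\bigr) \;\geq\; \alpha\, v_i(X),
\]
since $g^+ \in A_{j^*} \setminus X$ gives $v_i(A_{j^*} \setminus X) \geq v_i(g^+)$.

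\textbf{Proof of the claim.} I argue by contradiction: suppose every good in $A_{j^*} \setminus X$ has zero $v_i$-value, and let $P$ denote the set of $v_i$-positive goods in $A_{j^*}$. Then $P \subseteq X$ and, since $v_i(X) > 0$, $|P| \geq 1$. In either of two cases I exhibit a feasible $X^{\mathrm{new}} \subseteq A_{j^*}$ with $v_i(X^{\mathrm{new}}) < v_i(X)$, contradicting optimality. If $|P| \geq 2$, take $X^{\mathrm{new}} = X \setminus \{h_{\min}\}$ for $h_{\min} \in P$ of minimum $v_i$-value; when $A_{j^*} \setminus X$ is non-empty, feasibility reduces to $v_i(A_i) + v_i(X) \geq 2\, v_i(h_{\min})$, which follows from $v_i(X) \geq |P|\, v_i(h_{\min}) \geq 2\, v_i(h_{\min})$ (the sub-case $A_{j^*} = X$ is even easier: $A_{j^*} \setminus X^{\mathrm{new}}$ becomes a singleton, so the $\efx$ condition is vacuous). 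If $|P| = 1$, take $X^{\mathrm{new}}$ to be the set of all $v_i$-zero goods in $A_{j^*}$; then $A_{j^*} \setminus X^{\mathrm{new}} = P$ is a singleton, the $\efx$ condition is vacuous, and $v_i(X^{\mathrm{new}}) = 0 < v_i(X)$.

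\textbf{Tightness.} I would exhibit a two-agent, three-good instance with $v_1 = (\alpha v, v, v)$ for some $v > 0$ and $v_2 = (0, 1, 1)$, and set $\bfA = (\{g_1\}, \{g_2, g_3\})$. Agent $1$ is exactly $\alpha$-$\efx$, since both relevant inequalities read $\alpha v \geq \alpha \cdot v$; meanwhile, the optimal $X_{12} = \{g_3\}$ gives $\xv_1(\bfA) = (1+\alpha)v$, realizing the ratio $\alpha/(1+\alpha)$ exactly. Agent $2$'s constraints are vacuous.

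\textbf{Second part.} I would recycle the instance presented just before the theorem: $v_1 = (w, 0, 1/2)$ and $v_2 = (w, 1, \varepsilon)$, with $\bfA^* = (\{g_1, g_3\}, \{g_2\})$. As already observed, $\bfA^*$ is only $1/w$-$\efx$; the optimal $X_{21} = \{g_3\}$ (after which $A_1 \setminus X_{21} = \{g_1\}$ is a singleton and the $\efx$ condition is vacuous) gives $\xv_2(\bfA^*) = 1 + \varepsilon$, so $\bfA^*$ is $1/(1+\varepsilon)$-$\vefx$. For arbitrary $\alpha, \beta \in (0,1)$, choosing $w > 1/\beta$ and $\varepsilon \leq 1/\alpha - 1$ produces an $\alpha$-$\vefx$ allocation that is not $\beta$-$\efx$. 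The main obstacle throughout is the structural claim, especially the $|P|=1$ case, where one must abandon the ``peel off the smallest positive good'' reduction in favor of the ``take all zero-valued goods'' reduction.
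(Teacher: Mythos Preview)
Your argument is correct, but it takes a considerably longer route than the paper's. The paper avoids your structural claim entirely by never working with the optimal $X_{ij}$: instead, for any pair $(i,j)$ it simply observes that $X = A_j \setminus \{g\}$ (for $g$ a least-valued good in $A_j$) is feasible, since it leaves a singleton and the $\efx$ condition becomes vacuous. This immediately gives $\xv_i(\bfA) \le v_i(A_i) + v_i(A_j \setminus \{g\})$, and the $\alpha$-$\efx$ inequality $v_i(A_i) \ge \alpha\, v_i(A_j \setminus \{g\})$ rearranges to the desired bound in one line. Your approach, by contrast, analyzes the optimal $X$ and needs the two-case lemma on $|P|$ to locate a positively-valued good outside $X$. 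What your route buys is extra care with zero-valued goods: the paper's proof, as literally written, applies $\alpha$-$\efx$ at the overall minimum-value good, which is not covered by the definition when that value is $0$; your structural claim handles this cleanly (and the paper's argument is easily repaired by taking $g$ to be the least \emph{positively}-valued good, dispatching the all-zero case separately). So the paper's proof is shorter and more direct, while yours is more rigorous on this edge case and yields a mildly interesting structural fact about $X_{ij}$ as a by-product.

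For tightness and for the second part, your examples are essentially the same as the paper's up to reparameterization: the paper uses $v_i = (1, 1/\alpha, 1/\alpha)$ for tightness (your $(\alpha v, v, v)$ rescaled) and a fresh three-good instance $v_i = (1, \gamma, (1-\alpha)/\alpha)$ for the non-implication, whereas you recycle the MNW counterexample from earlier in the section. Both work.
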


\begin{proof}
Let $\bfA$ be an $\alpha$-$\efx$ allocation such that $v_i(A_i) \geq \alpha \cdot v_i( A_j\setminus \{g\})$, where $v_i(g)\in \min_{q \in A_j}v_i(q)$, for every pair of agents $i,j \in N$. We can equivalently write this inequality as $(1+\alpha)v_i(A_i) \geq \allowbreak \alpha ( v_i(A_i) \allowbreak + v_i( A_j\setminus \{g\} )$, and since $\xv_i(\bfA) \leq v_i(A_i \cup A_j\setminus \{g\})$, we obtain that 
$$\frac{v_i(A_i)}{\xv_i(\bfA)} \geq \frac{v_i(A_i)}{v_i(A_i \cup A_j\setminus \{g\})} \geq \frac{\alpha}{1+\alpha}.$$ 
Furthermore, by moving all goods in $A_j\setminus \{g\}$ from $j$ to $i$, we obtain an allocation in which $j$ gets only one good, and consequently $i$ is $\efx$ towards $j$. Hence, $\bfA$ is indeed $\frac{\alpha}{\alpha+1}$-$ \vefx$.

For the upper bound, let $\alpha \in (0,1)$ and consider an instance in which some agent $i$ has values given by the following table:
\begin{center}
\begin{tabular}{c?cccc}
& $g_1$ & $g_2$ & $g_3$ \\
\noalign{\hrule height 1pt}
agent $i$ & $1$ & $1/\alpha$ & $1/\alpha$ \\
\end{tabular}
\end{center}
In the allocation $\bfA$ according to which $A_i=\{g_1\}$ and $A_j = \{g_2, g_3\}$ for some agent $j \neq i$, agent $i$ is not $\efx$ towards agent $j$. Since $v_i(A_i)=1$ and $v_i(A_j \setminus \{g_2\})=1/\alpha$, we have that $\bfA$ is $\alpha$-$\efx$. 
Now, by moving $g_2$ from $j$ to $i$, $i$ becomes $\efx$ towards $j$, and achieves a total value of $v_i(A_i) + v_i(g_2) = 1 + 1/\alpha$, yielding an approximation of $\frac{1}{1/\alpha + 1}=\frac{\alpha}{1+\alpha}$ on her $\efx$-value.

For the reverse relation, let $\alpha \in (0,1)$ and $\gamma > \frac{1-\alpha}{\alpha}$. Consider an instance in which the values of some agent $i$ are:
\begin{center}
\begin{tabular}{c?cccc}
 & $g_1$ & $g_2$ & $g_3$ \\
\noalign{\hrule height 1pt}
agent $i$ & $1$ & $\gamma$ & $(1-\alpha)/\alpha$ \\
\end{tabular}
\end{center}
Let $\bfA$ be the allocation according to which $A_i=\{g_1\}$ and $A_j = \{g_2, g_3\}$ for some agent $j \neq i$.
Since $v_i(A_i) = 1$, agent $i$ can become $\efx$ towards $j$, by acquiring $g_3$, in which case she achieves a value of $1/\alpha$. Hence, $\bfA$ is $\alpha$-$\vefx$. However, since $v_i(A_j \setminus \{g_3\}) = \gamma$, $\bfA$ is only $1/\gamma$-$\efx$; the approximation can be arbitrarily close to $0$ as $\gamma$ becomes  large.
\end{proof}

Even though maximizing the Nash welfare may not yield a $\beta$-EFX for any $\beta \in (0,1)$ as we showed above, it is guaranteed to produce a constant $\vefx$ allocation.

\begin{theorem}\label{MNWapprox}
Any maximum Nash welfare allocation $\bfA^*$ is $1/2$-$\vefx$, and this bound is tight.
\end{theorem}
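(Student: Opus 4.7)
The plan is to upper-bound $\xv_i(\bfA^*)$ by $2 v_i(A_i^*)$ for every agent $i$, by exhibiting a single explicit candidate in the minimization that defines $X_{ij}$. For a fixed $i$ and any $j\neq i$ with $A_j^* \neq \varnothing$, let $h_j \in \arg\max_{g \in A_j^*} v_i(g)$ and consider the candidate $X = A_j^* \setminus \{h_j\}$. After transferring $X$ from $j$ to $i$, agent $j$ is left with the single good $h_j$, so the $\efx$ condition of $i$ towards $j$ is vacuously satisfied. Hence $X$ is feasible in the definition of $X_{ij}$, and so
\[
v_i(A_i^* \cup X_{ij}) \;\leq\; v_i(A_i^* \cup X) \;=\; v_i(A_i^*) + v_i(A_j^*) - v_i(h_j).
\]

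To bound the right-hand side I invoke the fact (due to \citet{CaragiannisKMPS19}) that every MNW allocation is EF1. Since removing the most $v_i$-valuable good from $A_j^*$ yields the most restrictive form of the EF1 inequality, this gives $v_i(A_i^*) \geq v_i(A_j^*) - v_i(h_j)$. Substituting, $v_i(A_i^* \cup X_{ij}) \leq 2\,v_i(A_i^*)$, and taking the maximum over $j$ yields $\xv_i(\bfA^*) \leq 2\,v_i(A_i^*)$, i.e., $v_i(A_i^*) \geq \tfrac{1}{2}\,\xv_i(\bfA^*)$. Since the argument applies to every $i$, $\bfA^*$ is $1/2$-$\vefx$. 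The edge cases $A_j^* = \varnothing$ and $v_i(A_i^*) = 0$ are handled by the same inequalities in the obvious way.

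For tightness I would reuse the $3$-value instance from Section~\ref{subsec:mnw-efx}: with $v_1 = (1-\varepsilon,1,1+\varepsilon)$ and $v_2 = (1,1-\varepsilon,1+\varepsilon)$, the allocation $(\{g_2\},\{g_1,g_3\})$ is MNW, and agent~$1$'s optimal augmentation is $X_{12} = \{g_1\}$, so $\xv_1(\bfA^*) = 2-\varepsilon$ while $v_1(A_1^*) = 1$. Therefore $v_1(A_1^*)/\xv_1(\bfA^*) = 1/(2-\varepsilon) \to 1/2$ as $\varepsilon \to 0^+$, which shows that the factor $1/2$ cannot be improved.

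There is essentially no serious obstacle in this argument; the only ``aha'' step is realizing that ``transferring everything in $A_j^*$ except its single most $v_i$-valuable good'' is always an $\efx$-feasible augmentation, after which EF1 applied to exactly that good is the inequality needed to close the bound. Notably, the proof does not use the MNW property beyond EF1 --- a mild surprise given how delicate the earlier MNW arguments in the paper were.
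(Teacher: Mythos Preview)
Your proof is correct, and in fact the paper explicitly mentions your route as a valid alternative just before giving its own argument: any EF1 allocation is $1/2$-$\vefx$, and then one invokes $\mnw \Rightarrow \text{EF1}$ from \citet{CaragiannisKMPS19}. Your key observation---that $X = A_j^* \setminus \{h_j\}$ is always feasible in the minimization defining $X_{ij}$ because it leaves $j$ with a single good---is precisely the idea behind the first half of Theorem~\ref{thm:efx-efx}, and the tightness instance you use is the same one the paper uses.

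Where you differ from the paper's presented proof is in what does the work. The paper deliberately avoids invoking EF1 and instead gives a self-contained argument: it fixes the \emph{smallest} prefix $S = \{b_1,\dots,b_\lambda\}$ of $A_j$ (ordered by $v_i$-value) whose transfer makes $i$ $\efx$ towards $j$, and shows directly from Nash-welfare optimality that $v_i(A_i) \geq v_i(S)$ via a swap argument (moving either $S$ or $A_j \setminus S$ to $i$, whichever half $j$ values less). Your approach is shorter and more modular---it isolates EF1 as the only property of MNW that is needed, which is a conceptually clean takeaway---while the paper's approach is self-contained and keeps the MNW structure visible throughout. Your closing remark that ``the proof does not use the MNW property beyond EF1'' is exactly the point the paper makes in the sentence preceding its proof.
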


By using arguments similar to those in the proof of Theorem \ref{thm:efx-efx}, we can show that any EF1 allocation is $1/2$-$\vefx$. Then, Theorem \ref{MNWapprox} follows from the result of \citet{CaragiannisKMPS19} about MNW implying EF1. Below, we present a direct and self-contained proof of Theorem \ref{MNWapprox}.

\begin{proof}
Consider a pair of agents $i$ and $j$. 
Let $A_i=\{a_1, ..., a_{|A_i|}\}$ and $A_j=\{b_1, ..., b_{|A_j|}\}$ be the sets of goods allocated to $i$ and $j$ according to $\bfA^*$, respectively. Without loss of generality, we may assume that $0 \leq v_i(b_1) \leq ... \leq v_i(b_{|A_j|})$. Let $S = \{b_1, ..., b_{\lambda}\}$, $\lambda < |A_j|$ be the subset of least-valued goods of $A_j$ such that $i$ is $\efx$ towards $j$ when given the set $S$ (on top of $A_i$), but is not $\efx$ towards $j$ when given only the set $S\setminus \{b_\lambda\}$. 
We are going to show that $v_i(A_i) \geq v_i(S) \Leftrightarrow 2v_i(A_i) \geq v_i(A_i \cup S)$. 
If this is true, then since $\xv_i(\bfA) \leq v_i(A_i \cup S)$, $\bfA^*$ must be $1/2$-$\vefx$. 

Assume towards a contradiction that $v_i(A_i) < v_i(S)$. 
By its definition, $S$ is such that 
$$v_i(A_i) + v_i(S \setminus \{b_\lambda\}) < v_i(A_j \setminus (S \setminus \{b_\lambda\})) - v_i(b_\lambda)$$ 
or, equivalently, 
$$v_i(A_i) + v_i(S \setminus \{b_\lambda\}) < v_i(A_j \setminus S).$$ 
Also $v_i(S \setminus \{b_\lambda\}) \geq 0$ implies that 
$v_i(A_i) < v_i(A_j \setminus S)$.
Since $v_j(S) + v_j(A_j \setminus S) = v_j(A_j)$, it must be the case that one of $v_j(S)$ and $v_j(A_j \setminus S)$ is at least $\frac{v_j(A_j)}{2}$, while the other is at most this much. 
\begin{itemize}
\item 
If $v_j(S) \leq \frac{v_j(A_j)}{2}$ and $v_j(A_j \setminus S) \geq \frac{v_j(A_j)}{2}$, then we define a new allocation in which all goods in $S$ are moved from agent $j$ to agent $i$. By our assumption that $v_i(A_i) < v_i(S)$, the product of the new values of the two agents is
\begin{align*}
\big( v_i(A_i) + v_i(S) \big) \cdot v_j(A_j \setminus S) > 2 v_i(A_i) \cdot \frac{v_j(A_j)}{2} = v_i(A_i) \cdot v_j(A_j). 
\end{align*}
Since the sets of goods given to the other agents have not been altered, the new allocation has strictly more Nash welfare than $\bfA^*$, a contradiction.

\item 
If $v_j(S) \geq \frac{v_j(A_j)}{2}$ and $v_j(A_j \setminus S) \leq \frac{v_j(A_j)}{2}$, then we define another allocation in which all goods in $A_j \setminus S$ are moved from $j$ to $i$. Since $v_i(A_i) < v_i(A_j \setminus S)$, the product of the new values of the two agents becomes
\begin{align*}
\big( v_i(A_i) + v_i(A_j \setminus S) \big) \cdot v_j(S) > 2 v_i(A_i) \cdot \frac{v_j(A_j)}{2} = v_i(A_i) \cdot v_j(A_j), 
\end{align*}
again yielding a contradiction.
\end{itemize}
Consequently, it must be $v_i(A_i) \geq v_i(S)$, meaning that $\bfA^*$ is $1/2$-$\vefx$. 

For the upper bound, consider again the instance used in Section~\ref{sec:mnw} to show that an MNW allocation may not be $\efx$ for $3$-value instances. For ease of reference, we repeat the table containing the values of the two agents for the three goods here:
\begin{center}
\begin{tabular}{c?cccc}
 & $g_1$ & $g_2$ & $g_3$ \\
\noalign{\hrule height 1pt}
agent 1 & $1-\varepsilon$ & $1$ & $1+\varepsilon$ \\
agent 2 & $1$ & $1-\varepsilon$ & $1+\varepsilon$ \\
\end{tabular}
\end{center}
As we argued in Section~\ref{sec:mnw}, the allocations
$\bfA_1 = (\{g_2,g_3\}, \{g_1\})$ and $\bfA_2 = (\{g_2\}, \{g_1,g_3\})$ are the only Nash welfare maximizing ones. 
Both of these are $\frac{1}{2-\varepsilon}$-$\vefx$ since the envious agent can get the least-valued good from the other agent (worth $1-\varepsilon$) and become $\efx$. 
\end{proof}

\section{Directions for Future Work}
\label{sec:open}
We studied the connection between two celebrated notions, that of maximum Nash welfare and envy-freeness up to any good. We showed that a maximum Nash welfare allocation is always $\efx_0$ for $2$-value instances, while this implication is no longer true for $k$-value instances with $k \geq 3$. The first question that our work leaves open is whether it is possible to compute in polynomial-time an allocation that maximizes the Nash welfare for $2$-value instances.
% We believe that this is indeed the case, but were unable to prove it.

Nevertheless, for $2$-value instances we presented a polynomial-time algorithm for computing an $\efx_0$ allocation. Due to its novelty, we believe that the idea of repeatedly computing maximum matchings and freezing certain agents whenever they acquire too much value (compared to other agents), might be a stepping stone for proving the existence of $\efx_0$ more generally. That being said, while generalizing our algorithm to $k$-value instances with $k \geq 3$ definitely deserves further investigation, it does seem to be a highly non-trivial task.

Finally, going beyond exact MNW or EFX allocations, we discussed the connection between MNW and approximate EFX allocations. While an MNW allocation does not necessarily provide any meaningful guarantee according to the commonly used definition of approximation, we showed that it does under a new interpretation of approximation via the $\efx$-value. A natural question is whether one can design polynomial-time algorithms with strong approximation guarantees for both the $\efx$-value and the Nash welfare.

\bibliographystyle{named}
\bibliography{mnw-efx-references-full}

\end{document}